\shorttitle{On a wider class of prior distributions for graphical models} 
\newcommand{\RR}{\mathbb{R}} 
\renewcommand{\Pr}{\mathbb{P}}
\newcommand{\add}[1]{{#1}}  
\newcommand{\addd}[1]{{#1}}  
\begin{document}

\title{On a wider class of prior distributions for\\ graphical models} 



\authorone[\addd{University of Oxford}]{Abhinav Natarajan}

\authortwo[National University of Singapore]{Willem van den Boom}

\authorthree[National University of Singapore]{Kristoforus Bryant Odang}

\authorfour[National University of Singapore; Singapore Institute for Clinical Sciences, A*STAR; University College London]{Maria de Iorio}

\addressone{\addd{Mathematical Institute, University of Oxford, Wellington Square, Oxford OX1 2JD, United Kingdom. Email address: natarajan@maths.ox.ac.uk}}

\addresstwo{Yong Loo Lin School of Medicine, National University of Singapore, 10 Medical Dr, Singapore 117597. Email address: vandenboom@nus.edu.sg}

\addressthree{Yong Loo Lin School of Medicine, National University of Singapore, 10 Medical Dr, Singapore 117597. Email address: kristoforusbryant@u.yale-nus.edu.sg}

\addressfour{Yong Loo Lin School of Medicine, National University of Singapore, 10 Medical Dr, Singapore 117597. Email address: mdi@nus.edu.sg}

\begin{abstract}  
Gaussian graphical models are useful tools for conditional independence structure inference of multivariate random variables.
Unfortunately, Bayesian inference of latent graph structures is challenging due to exponential growth of $\mathcal{G}_n$, the set of all graphs in $n$ vertices.
One approach that has been proposed to tackle this problem is to limit search to subsets of $\mathcal{G}_n$.
In this paper, we study subsets that are vector subspaces with the cycle space $\mathcal{C}_n$ as main example.
We propose a novel prior on $\mathcal{C}_n$ based on linear combinations of cycle basis elements and present its theoretical properties.
Using this prior, we implement a Markov chain Monte Carlo algorithm\add{,} and show that (i) posterior edge inclusion estimates \add{computed with our technique} are comparable \add{to estimates from the standard technique} despite searching a smaller graph space\add{,} and (ii)
the vector space perspective enables straightforward \add{implementation of} MCMC algorithms.
\end{abstract}

\keywords{Gaussian graphical model; Bayesian Statistics; network inference; cycle space; Markov chain Monte Carlo; vector space}

\ams{62H22}{05C80;05C90}

\section{Introduction}
\label{sec1}

Gaussian graphical models (GGMs) \cite{dempster1972covariance, lauritzen1996graphical} have become a standard technique to represent the conditional independence structure of a set of random variables.
Given an undirected graph $G = (V, E)$, the set of vertices $V=\{1,\ldots,n\}$ represents the random variables while the set of edges $E \subseteq \{(i,j) \in V \times V: i < j\}$ represents conditional dependencies between the variables.
In the Bayesian literature, a latent graph $G$ is often inferred by first specifying a prior $p(G)$ on the space of graphs followed by a prior on the precision matrix $K$ conditional on the graph, $p(K \mid G)$. 
Then, posterior inference is performed through sampling algorithms such as Markov chain Monte Carlo (MCMC) \cite[e.g.][]{giudici1999decomposable, jones2005experiments} or sequential Monte Carlo (SMC) \cite{tan2017bayesian, van2021unbiased}.
One of the main challenges in graph inference is that the number of graphs grows exponentially with the number of vertices since the size of the space $\mathcal{G}_n$ of all  graphs with $n$ vertices is $2^{n (n -1) / 2}$.

There are two main lines of research addressing the computational difficulties associated with the size of graph space: (i) devising efficient sampling algorithms and (ii) considering particular subsets of $\mathcal{G}_n$ with desirable properties.
This work explores novel subsets in light of this second direction.
A particularly popular restriction of $\mathcal{G}_n$ consists of focusing on the set of decomposable graphs as it allows for exact posterior computation under the conjugate Hyper-Inverse Wishart prior (see, for example, \cite{giudici1996learning,giudici1999decomposable, green2013sampling,lauritzen1996graphical, scott2008feature, wang2010simulation}). 
However, the decomposability assumption is often restrictive in applications:
suppose that the ``true'' graph is non-decomposable.  Niu et al.~\cite{niu2021bayesian} prove that, as the number of observations goes to infinity, the posterior distribution concentrates at the minimal triangulations of the true graph. These triangulations can present $O(n \log n)$ spurious edges \citep{chung1994chordal}.

As an alternative to decomposable graphs, recent research effort has been devoted to consider 
the set of spanning trees $\mathcal{T}_n$ on the complete graph (i.e.\ the graph containing all possible edges), which is necessarily a subset of decomposable graphs as trees do not contain cycles.
Højsgaard et al.~\cite{hojsgaard2012graphical} propose a maximum a posteriori estimator of the latent spanning tree based on Kruskal's \add{\cite{kruskal1956shortest}} algorithm.
Schwaller et al.~\cite{Schwaller2019} are able to derive the exact posterior distribution of each tree in $\mathcal{T}_n$ given a set of observations by applying Kirchoff's matrix tree theorem \citep{chaiken1978matrix} while Duan and Dunson~\cite{duan2021bayesian} propose a Bayesian model for spanning trees where the  likelihood function is built on a generative graph process and leads to an efficient Gibbs sampling algorithm. Although spanning trees provide computational advantages compared to considering all graphs,
they only allow inference of the ``backbone'' of the graph \citep{duan2021bayesian}, i.e.\ the minimum spanning tree of the data generating graph, at the cost of losing information about its global structures.

In this paper, we investigate larger subsets of $\mathcal{G}_n$ arising from an algebraic perspective on graphs with the cycle space as the main concrete example.
Firstly, note that the set $\mathcal{G}_n$ together with the edgewise \emph{modulo two addition} $\oplus$ and the \emph{trivial multiplication} (both operations are defined in Section~\ref{sec: cycle-basis}) forms a vector space on the finite field $\mathbb{Z}_2$, where the set of all edges in the complete graph forms the standard basis of $\mathcal{G}_n$.
Definitions and preliminary results about the graph vector space are deferred to Section~\ref{sec2}.

Our work builds on the fact
that alternative bases for $\mathcal{G}_n$ exist.
In particular, we consider the example of the space spanned by linear combinations of 
cycles on $V$
which forms the cycle space $\mathcal{C}_n$. 
The cycle space is a proper subspace of $\mathcal{G}_n$ which can be spanned using bases consisting of cycles \citep{wallis2010beginner}.
We investigate the  theoretical properties of \add{this} space
and of the prior distribution on graphs
when the prior is specified through a prior on cycles, instead of the common practice of specifying a prior on the edges,
as well as the implications for statistical inference of such a prior.
We show that the cycle space $\mathcal{C}_n$ is a substantially smaller set than $\mathcal{G}_n$,
though this does not notably affect posterior inference
such as edge inclusion probabilities in the concluding application.

The paper is structured as follows. Section~\ref{sec2} reviews graphical modelling concepts, how $\mathcal{G}_n$ constitutes a vector space, cycle basis theory and compares the cycle space with the set of decomposable graphs. 
Section~\ref{sec:theory} analyses the cycle basis prior.
\addd{Section~\ref{sec:simul} contains simulation studies.}
In 
Section~\ref{sec5}, we apply the proposed algorithm to gene expression data. 
Section~\ref{sec6} concludes the paper and discusses potential future directions.
All proofs can be found in Supplementary Material.

\section{Background} 
\label{sec2}

\subsection{Graph preliminaries}
\label{sec:graph_prelim}

We consider undirected graphs with no loops and no multi-edges, also called simple graphs.
A \emph{path} $\gamma_{\alpha,\beta}$
between two vertices $\alpha,\beta\in V$ is
a sequence of edges $(e_1, \dots, e_m)$
such that $v_1 = \alpha$, $v_{m+1} = \beta$ and $e_i = (v_i, v_{i+1})$ for all $i \in \{1, \dots m\}$
where every vertex in the vertex sequence $(v_1, \dots, v_{m+1})$ is distinct.
\add{A \emph{circuit} is a
sequence of edges $(e_1, \dots, e_m)$
where $v_1 = v_{m+1}$.}
A \emph{cycle} is a \add{circuit}
where every vertex in $(v_1, \dots, v_m)$ is distinct.
A graph is \emph{connected} when a path $\gamma_{\alpha,\beta}$ exists for all $\alpha, \beta \in V$.
A graph is \emph{complete} if it contains all possible edges, that is $E = \{(i,j) \in V \times V: i < j\}$.
A \emph{tree} is a graph where there is exactly one path between any pair of vertices. A \emph{spanning tree} of a connected graph $G = (V, E)$ is a tree with vertex set $V$ and an edge set which is a subset of $E$.
\add{A star tree $T$ on the vertices $\{v_0, \ldots, v_{n-1}\}$ is the spanning tree whose edges are $(v_0, v_i)$ for $i = 1, \ldots, n-1$. In this case, $v_0$ is the root of $T$.}
The degree of a vertex $v$ is the number of edges incident to $v$.

\subsection{Bayesian inference of GGMs}

In a GGM, a graph $G=(V,E)$ on $n$ vertices models the zeros in the precision matrix $K$
of the Gaussian distribution $N(0, K^{-1})$ on the independent rows of the $N\times n$ data matrix $X$:
\begin{equation} \label{eq: x-given-K}
p(X\mid K) \propto |K|^{N/2} \exp(-\langle K, U\rangle /2 )
\end{equation}
where
$U = X^T X$ and $\langle A, B\rangle = \mathrm{tr}(A^T B)$.
Specifically, $K_{ij}=0$ if \add{$(i,j) \notin E$}.
For Bayesian inference,
Dawid and Lauritzen~\cite{dawid1993hyper} introduce the Hyper-Inverse Wishart prior which is a conjugate prior $p(K \mid G)$ for decomposable graphs $G$.
\add{Giudici~\cite{giudici1996learning} and}
Roverato~\cite{roverato2002hyper} generalise this idea to non-decomposable graphs, proposing  the $G$-Wishart distribution $W_G(\delta, D)$ with degrees of freedom $\delta>2$ and rate matrix $D$ which has density 
\begin{equation} \label{eq: K-given-G}
p(K\mid G) = \frac{1}{I_G(\delta, D)} |K|^{(\delta -2)/2}  \exp(-\langle K, D\rangle /2) 
\end{equation}
with respect to the Lebesgue measure on the set of positive definite matrices with zeros imposed by $G$.
Here, $I_G(\delta, D)$ is a normalising constant.

Combining the prior in \eqref{eq: K-given-G} and the likelihood \eqref{eq: x-given-K},
$K\mid G,X \sim W_G (\delta + N, D + U)$
due to conjugacy.
The likelihood for $G$ with $K$ marginalised out follows as \citep[e.g.][]{atay2005monte}
\begin{equation}\label{eq: marginal-lik}
p(X \mid G) = \frac{I_G(\delta + N, D + U)}{(2\pi)^{Nn/2} I_G(\delta, D)}
\end{equation}

Unfortunately, computing $I_G(\delta, D)$ for a general graph $G$, despite a recently derived recursive expression \citep{uhler2018exact}, is intractable.
Monte Carlo \citep{atay2005monte} and Laplace \citep{lenkoski2011computational} approximations have thus been suggested, along with methods that avoid the normalising constant computation entirely \add{\citep{wang2012efficient}} using the exchange algorithm of Murray et al.~\cite{murray2012mcmc}.
Notably,
$I_G(\delta, D)$
is computationally tractable for decomposable $G$,
motivating
the common restriction of
the space of graphs $\mathcal{G}_n$
to decomposable graphs.

Having the marginal likelihood of the observations given a general graph,  Bayesian inference is performed by defining a prior $p(G)$ on $\mathcal{G}_n$.
Such prior distributions are usually constructed through the specification of edge inclusion probabilities.
In the next section, we define a prior distribution over graph space  based on more complex substructures than edges. 

\subsection{$\mathcal{G}_n$ as vector space and the cycle space} \label{sec: cycle-basis}

The set $\mathcal{G}_n$ of all graphs with $n$ vertices is part of a vector space over the finite field $\mathbb{Z}_2$.
Here, $\mathbb{Z}_2$ is a field on the set $\{0, 1\}$ where addition is the modulo 2 addition (i.e.\ $0 + 0 = 0$, $1 + 0 = 0 + 1 = 1$, and $1 + 1 = 0$) while multiplication is trivial (i.e.\ $0 \times x  = 0$ and $1 \times x = x$ for any $x \in \{0, 1\}$).
The vector space is the triple $(\mathcal{G}_n, \oplus, \otimes)$ where $\oplus$ is the modulo 2 addition of the edges and $\otimes$ is the (trivial) multiplication such that $0 \otimes G$ is the graph with no edges, and $1 \otimes G = G$.
Note that the set of all edges in the complete graph forms one basis of this vector space which we call the \emph{standard basis}. 
Alternative bases exist for the same vector space, which can (i) be used as building components of a graph (in a similar way as with edges) and (ii)
induce priors $p(G)$ through the specification of a distribution
on the elements of the basis.
For illustration,
Figure~\ref{fig:alternative-basis} considers three different bases of $\mathcal{G}_n$ with $n=4$ vertices.
\begin{figure}[htb]
    \centerline{\includegraphics[width=\textwidth]{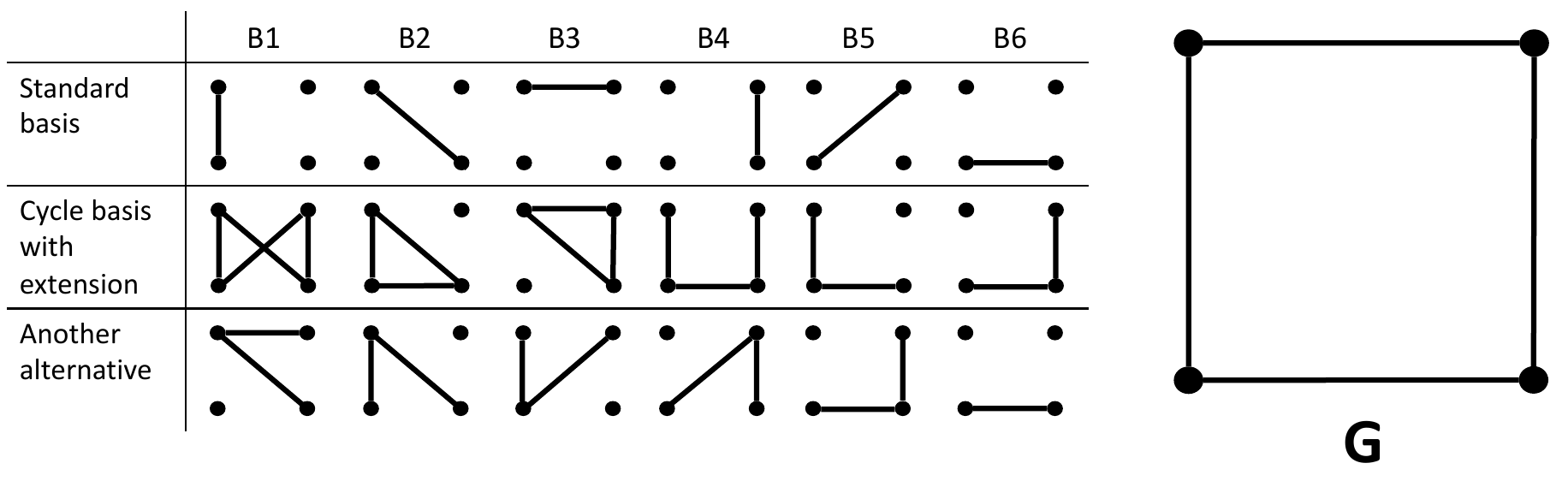}} 
	\caption{The table shows three different bases of the graph space $\mathcal{G}_n$ with $n=4$ vertices. Consider the decomposition of the graph $G\in \mathcal{G}_n$ on the right
	over each of the bases:
	with the standard basis, $G = \textnormal{B1} + \textnormal{B3} + \textnormal{B4} + \textnormal{B6}$.
	With the cycle basis, $G = \textnormal{B2} + \textnormal{B3}$.
	For the last basis, $G = \textnormal{B1} + \textnormal{B2} + \textnormal{B5}$.}
	\label{fig:alternative-basis}
\end{figure}

The basis of this work is restricting the graph space in GGMs by considering subspaces of $\mathcal{G}_n$.
A \emph{subspace} of the graph vector space is a subset of $\mathcal{G}_n$ that is closed under $\oplus$.
Restricting graphs via an appropriate subspace is desirable in GGMs for two reasons.
Firstly, it shrinks the search space to a subset of $\mathcal{G}_n$.
Secondly, being closed under addition, a subspace lends itself well to convenient Monte Carlo sampling steps as algorithms that only propose states in the subspace are readily constructed.
In this paper, we focus on one particular subspace, namely the cycle space:
\begin{definition}[Cycle space]\label{dfn: cycle-space}
The cycle space $\mathcal{C}_n$ of graphs with $n$ vertices is the set of linear combinations of cycles in $\mathcal{G}_n$. 
\end{definition} 

We now present some relevant properties of $\mathcal{C}_n$.
By its definition, $\mathcal{C}_n$ is closed under addition making it a proper vector space:
\begin{enumerate}[label=P\arabic*]
    \item The cycle space $\mathcal{C}_n$ is a subspace of the vector space $(\mathcal{G}_n, \oplus, \otimes)$ \citep[Theorem~5.1]{wallis2010beginner}.
\end{enumerate}
Whether a graph $G\in\mathcal{G}_\add{n}$ is in $\mathcal{C}_n$
can be conveniently checked using Veblen's theorem:
\begin{enumerate}[label=P\arabic*,resume]
    \item \label{p:veblen}
    A graph $G\in\mathcal{G}_n$ is an element of $\mathcal{C}_n$ if and only if every vertex has even degree in $G$, i.e.\ every vertex has an even number of neighbours \citep{Veblen1912}.
\end{enumerate}
\add{
A more intuitive description of the graphs in $\mathcal{C}_n$
is that they are precisely those that are the union of edge-disjoint cycles \citep[Theorem~5.1]{wallis2010beginner}.
Also, each connected component of a graph in $\mathcal{C}_n$
is a circuit.
Notably,
the edge union of cycles with overlapping edges is not necessarily in $\mathcal{C}_n$:
the edge union of the cycles $(1,2,3,1)$ and $(1,2,4,1)$ in the example from Figure~\ref{fig:alternative-basis} results in a degree of three for vertices~1 and 2.
Also, whether the complete graph is in $\mathcal{C}_n$ depends on the parity of $n$ by Property~\ref{p:veblen}.
}

Bases of $\mathcal{C}_n$ can be readily found using fundamental systems of cycles.
\begin{definition}[Fundamental system of cycles]\label{defn: system-cycles} 
Let $T = (V, E_T)$ be a spanning tree of
the complete graph
and $\overline{T} = (V, E_{\overline{T}})$ be the complement of $T$.
That is, $e\in E_{\overline{T}}$ if and only if $e\notin E_{T}$.
The \emph{fundamental system of cycles} with respect to $T$ is the set of graphs obtained by taking each cycle formed by adding one edge in $\overline{T}$ to $T$.  
\end{definition}
(We constrain ourselves to \addd{spanning trees of} the complete graph for simplicity, though note that the fundamental system of cycles is usually also defined for incomplete graphs.)
\begin{enumerate}[label=P\arabic*,resume]
\item  \label{p:bases}
 Every fundamental system of cycles is a basis of $\mathcal{C}_n$ \citep[Theorem~5.5]{wallis2010beginner}.
\end{enumerate}
\begin{figure}[htb]
    \centerline{\includegraphics[width=\textwidth]{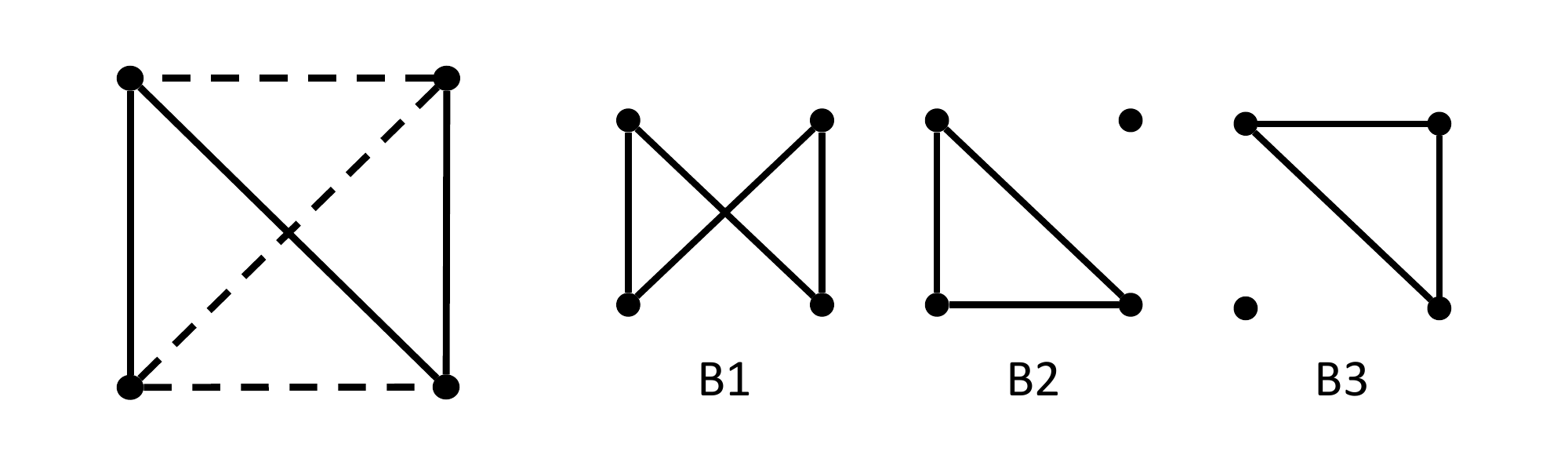}} 
	\caption{Illustration of a fundamental system of cycles. The graph on the left visualises the spanning tree $T$ with solid lines and its complement $\overline{T}$ with dashed lines. The basis elements on the right are the cycles obtained by adding one of the dashed edges in $\overline{T}$ to $T$.}
	\label{fig: cycle-space-from-tree}
\end{figure}
Figure~\ref{fig: cycle-space-from-tree}
visualises how to obtain the fundamental system of cycles that constitutes the cycle basis considered in Figure~\ref{fig:alternative-basis}.
In Figure~\ref{fig:alternative-basis},
this basis, which spans $\mathcal{C}_n$, is extended to span the whole of $\mathcal{G}_n$.
Such extensions exist for any basis spanning a subspace
\citep{sheldon2015linear}.
Since $T$ has $(n - 1)$ edges
and the number of elements in the fundamental system of cycles equals the number of edges in $\overline{T}$,
Property~\ref{p:bases} implies the dimension of $\mathcal{C}_n$:
\begin{enumerate}[label=P\arabic*,resume]
\item \label{p:dim}
The number of elements in a cycle basis and thus the dimension of the vector space $\mathcal{C}_n$ is $n(n-1)/2 - (n - 1) = (n-1)(n-2)/2$.
\end{enumerate}

\add{

As considered in Figure~\ref{fig:alternative-basis},
the number of basis elements in a decomposition of a graph varies with the basis considered.
That same is true when only considering cycle bases.
Consider for instance Figure~\ref{fig:n5_example}.
The graph that is basis element B6 for the path graph
involves not one but three basis elements for the cycle basis derived from the star tree (B1 + B2 + B5).

\begin{figure}[tb]
    \centerline{\includegraphics[width=\textwidth]{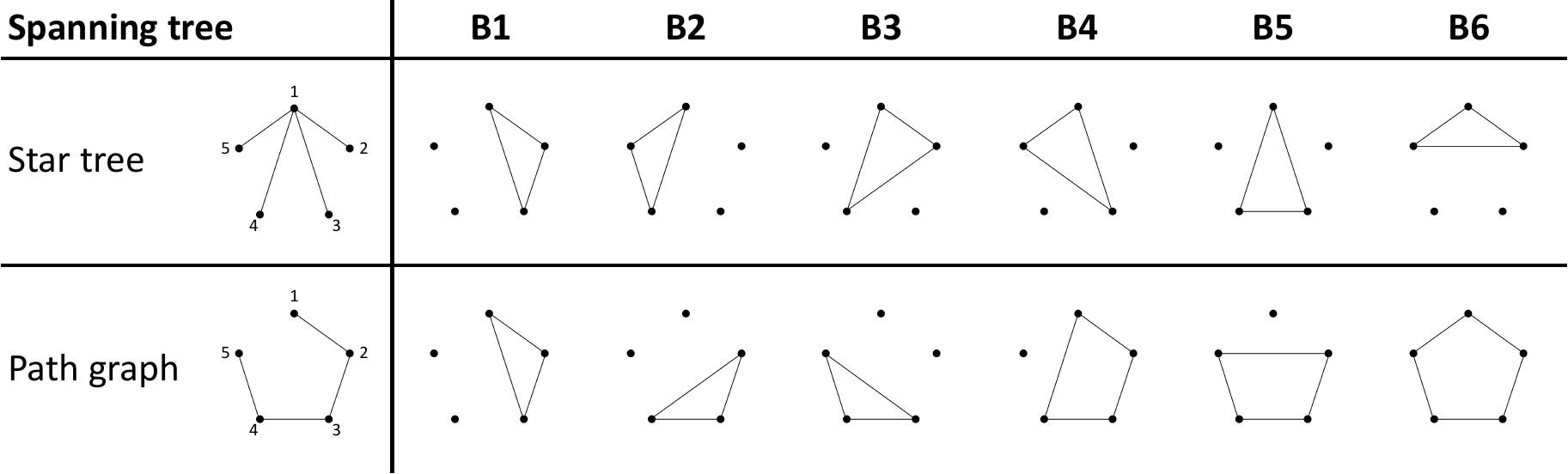}} 
	\caption{\add{Illustration of two fundamental systems of cycles with $n=5$ vertices.}}
	\label{fig:n5_example}
\end{figure}

}

\subsection{Comparing the cycle space with the set of decomposable graphs}
\label{sec:comparison}

As discussed in Section~\ref{sec1}, there have been efforts to limit the search space to a subset of $\mathcal{G}_n$, most notably to the set of decomposable graphs.
While this restriction has the advantage of allowing normalising factors to be computed exactly, there are drawbacks to this restriction especially in applications.
Firstly, unlike the cycle space, the set of decomposable graphs is not known to be closed under any vector addition operation.
\add{
As a result,
more involved
MCMC algorithms
have been proposed to ensure that proposed graphs are decomposable
via
constraints on which edge to flip
\citep{giudici1999decomposable,Stingo2015}
or by sampling from the larger set of junction trees, which correspond to decomposable graphs 
\citep{green2013sampling}.}
On the other hand, in the cycle space, we are able to propose moves which are guaranteed to be in $\mathcal{C}_n$ by standard properties of vector spaces.
\add{Relatedly, decomposability cannot be checked as easily as Property~\ref{p:veblen} which determines membership of $\mathcal{C}_n$.}

Another,
more important drawback of restricting inference to decomposable graphs is that, in general, decomposable graphs do not approximate other graphs well. For instance,
$O(n \log n)$ edges
need to be added to a graph
to make the resulting graph decomposable in general
\citep{chung1994chordal}.
This contrast with the cycle space
as  any graph $G\in\mathcal{G}_n$
is \emph{at most} $n/2$ edges different from
a graph in $\mathcal{C}_n$
per the following result:

\begin{proposition} \label{thm:dense}
Let $k$ be the number of
odd-degree vertices
in $G\in\mathcal{G}_n$.
Then, there exists a graph in $ \mathcal{C}_n$ that differs from $G$ by
$k/2$ edges
while there exists no graph in $ \mathcal{C}_n$ that differs from $G$ by less than $k/2$ edges.
\end{proposition}

The size of $\mathcal{C}_n$ is also larger than that of the set of decomposable graphs.
Note that $\mathcal{G}_n$ has $2^{n (n-1)/ 2}$ elements since the standard basis is the edge set containing $n (n-1)/ 2$ elements.
By Property~\ref{p:dim},
the cardinality of $\mathcal{C}_n$ is $2^{(n-1) (n-2) / 2}$.
In contrast, the number of decomposable graphs on $n$ vertices tends towards $2^{n^2/4 + O(n \log n)}$ as $n\to\infty$
\citep{bender1985almost}.
(For comparison, the number of spanning trees on the complete graph, which are decomposable, is $|\mathcal{T}_n|=n^{n-2}$ by Cayley's~\cite{Cayley1889} formula.)
Thus, constraining graphs to the cycle space is substantially less restrictive than assuming decomposability with a clear upper bound to the difference between any true graph and $\mathcal{C}_n$.
\add{Lastly, the set of decomposable graphs is not a subset of $\mathcal{C}_n$. For instance, trees are not in $\mathcal{C}_n$ by Property~\ref{p:veblen} since their leaf vertices have degree one.
}

\section{Theoretical properties of cycle basis priors}\label{sec:theory}

A prior distribution on $\mathcal{C}_n$ can be induced by assigning a distribution to cycle basis elements. In this section, we explore the theoretical properties of such priors.
Here, $C$ denotes a basis of cycles for $\mathcal{C}_n$.
Thus, $C$ is a set of $(n-1)(n-2)/2$
cycles.
The results hold for any fixed basis $C$ unless otherwise noted.
We begin by showing that it is possible to induce the uniform prior on $\mathcal{C}_n$.

A cycle basis prior can be defined in terms of cycle-inclusion probabilities. 
Similar\add{ly} to the standard basis, when the cycle-inclusion probability is 0.5, we get the uniform prior on $\mathcal{C}_n$.
\begin{proposition} \label{prop:uniform}
\add{Let $C$ be any cycle basis for the cycle space $\mathcal{C}_n$.} Suppose that the inclusion of each cycle in $C$ is an independent Bernoulli random variable with probability $0.5$. Then, the induced distribution on
$\mathcal{C}_n$ is uniform \add{and the marginal edge inclusion probabilities equal $0.5$}.
\end{proposition}
\add{
This uniformity holds for any $C$ and thus also for the distribution on $\mathcal{C}_n$ induced by a distribution on such bases $C$.
The edge inclusions are not independent under the uniform distribution over $\mathcal{C}_n$:
independent edges with probability $0.5$ yields the uniform distribution over $\mathcal{G}_n$.
Instead, edge flips can only happen jointly on sets of edges that correspond to graphs in $\mathcal{C}_n$ to stay in the cycle space, i.e.\ to continue to satisfy Property~\ref{p:veblen}.

Often, there is interest in non-uniform distributions, for instance to induce sparsity.}
In general, the edge inclusion probabilities induced by inclusion probabilities of cycle basis elements depend on the choice of cycle basis, and hence on the choice of spanning tree used to generate this basis via the fundamental system of cycles (Property~\ref{p:bases}).
Although a closed-form expression is not available, we propose an efficient algorithm to compute the edge inclusion probabilities from a cycle basis prior.

\begin{proposition} \label{prop:edge_prop}
Let $e \in \{(i,j) \in V \times V: i < j\}$ be any edge. Let $\{c_1, \ldots, c_r\}\subset C$ be the set of basis elements that contain $e$. Suppose that the inclusions of these elements are independent \add{with inclusion probabilities $\{p_1, \ldots, p_r\}$}. Define the polynomial
\[
f(x) = \prod_{i=1}^r (1-p_i + p_i x).
\]
Then, the induced marginal probability of inclusion of the edge $e$ is the sum of the coefficients of the odd powers of $x$ in $f(x)$.
\add{This edge probability reduces to
${\{1 - (1-2p)^r\}/2}$
if the probabilities $\{p_1, \ldots, p_r\}$ are all equal to $p$.}
\end{proposition}
\add{If the cycle basis is generated from a star tree with independent and equally probable inclusion of basis elements, then these probabilities simplify.
(The fundamental system of cycles with respect to a star tree is the set of all cycles of three edges that are incident to \add{its root}.)
\begin{corollary} \label{cor:edge_prob_star}
Let the basis $C$ be
defined as the fundamental system of cycles with respect to a star tree $T$ on $n\geq 2$ vertices.
Suppose that the cycle inclusions are independent Bernoulli random variables  with probability $p$.
Then, the marginal edge inclusion probabilities are given by
\[
    \Pr\{(i,j) \in E \mid T\} =
\begin{cases}
    \{1-(1-2p)^{n-2}\}/2, &\text{$i$ or $j$ is the root of $T$} \\
    p, &\text{otherwise}
\end{cases}
\]
Moreover, the distribution
induced by the uniform distribution over all star trees
has $\Pr\{(i,j) \in E\} = p + \{1-2p - (1-2p)^{n-2}\}/n$.
\end{corollary}}

Polynomial multiplication can be performed efficiently using linear convolution based on a fast Fourier transform (FFT), making the computation of the probabilities \add{in Propostion~\ref{prop:edge_prop}} efficient. For the sake of completeness, we also provide an algorithm to compute the joint distribution of edge inclusions induced by the cycle basis prior. This is useful in, for example, calculating the degree distribution of a vertex.

\begin{proposition} \label{prop:edge_prop_joint}
Let $v\in V$ be any vertex.
Let $\{e_1, \ldots, e_m\}$ be the set of edges incident to $v$.
Let $\{c_1, \ldots, c_r\} \subset C$ be the set of cycles incident to $v$.
Suppose that the cycle inclusions are independent \add{with inclusion probabilities $\{p_1, \ldots, p_r\}$}.
For $i=1,\dots,r$, let $a(i), b(i) \in \{1, \ldots, m\}$ be such that $e_{a(i)}, e_{b(i)}$ are the edges of $c_i$ that are incident to $v$. Define the polynomial
\[
f(t_1, \ldots, t_r) = \prod_{i=1}^r (1-p_i + p_it_i).
\]
Let $g$ be the image of $f$ in the polynomial ring $\RR[x_1, \ldots, x_m]/ \langle x_1^2, \ldots, x_m^2 \rangle$ under the unique ring homomorphism satisfying $t_i \mapsto x_{a(i)}x_{b(i)}$. Then, the probability of including any set of edges $\{e_s\}_{s\in S}$ while excluding the other $m-|S|$ edges incident to $v$ is given by the coefficient of $\prod_{s\in S} x_s$ in $g$.
\end{proposition}

The above proposition can be applied in practice by noting that $g$ is equal to the product of the polynomials $1 - p_i + p_i x_{a(i)}x_{b(i)}$ in the ring $\RR[x_1, \ldots, x_m]/\langle x_1^2, \ldots, x_m^2 \rangle$. Multiplication of linear polynomials modulo squares is equivalent to a circular convolution. By the circulant convolution theorem, this can be efficiently performed using an FFT of length 2 in $m$ dimensions.

Although the above algorithm can be used to calculate degree distributions in the general case, closed-form expressions may be computed when the spanning tree used to generate the cycle basis has a tractable structure as in the case of star trees.

\begin{proposition} \label{prop:deg_dist}
Let the basis $C$ be
defined as the fundamental system of cycles with respect to the star tree on the vertices $\{v_0, \ldots, v_{n-1}\}$, $n\geq 2$, rooted at $v_0$.
Suppose that the cycle inclusions are independent Bernoulli random variables  with probability $p$.
Then, the degree distribution for the vertices $\{v_1, \dots, v_{n-1}\}$ is given by
\[
    \Pr\{\deg(v_i) = k\} =
\begin{cases}
    (1-p)^{n-2}, &k=0 \\ \sum_{j=k-1}^k\binom{n-2}{j}p^j (1-p)^{n-2-j}, &\text{$2\leq k< n$ and $k$ is even} \\
    0, &\text{otherwise}
\end{cases}
\]
\end{proposition}

We now briefly discuss the sparsity of graphs in the cycle space.
\add{For a given probability $p$ of independent basis element inclusions,
Proposition~\ref{prop:edge_prop}
gives the
edge probability $\{1 - (1-2p)^r\}/2$ which is an increasing function of $p$ for $p\leq 0.5$
and which involves the number $r$ of elements in the basis $C$ that include the edge.
Thus, setting a smaller $p$ yields shrinkage on the number of edges: the edge probability can be made arbitrarily small via $p$.
This shrinkage depends on $C$ via $r$.
For instance,
consider Figure~\ref{fig:n5_example} and edge $(2,3)$.
There, $r=1$ for $C$ generated by the star tree and $r=5$ for $C$ generated by the path graph.
In general,
$r=1$ for any edges in the complement $\overline{T}$
of the spanning tree $T$ that generated $C$.
For edges in $T$, $r$ varies with $C$.

More generally},
there is no simple relationship between the sparsity of a graph in terms of its cycle basis inclusions and the edge-sparsity of that graph. However, analytical results are available in the particular case that the cycle basis is generated from a star tree.
\begin{proposition} \label{prop: shrinkage}
Let the basis $C$ be
defined as the fundamental system of cycles with respect to a star tree on $n$ vertices.
Define $m=\lfloor(n-1)/2\rfloor$.
Consider the graph $G = (V, E)$ formed by the inclusion of $\add{q}$ basis elements from $C$.
Then, the number of edges $|E|$ is bounded as $\add{q} \leq |E| \leq \add{q} + 2 \min(\add{q},\, m)$.
The upper bound is tight for $\add{q}\leq m$.
\end{proposition}
This implies that specifying a shrinkage prior on $\add{q}$ leads to shrinkage on the number of included edges for cycle bases induced by a star tree.
We also consider $|E|$ as a function of $\add{q}$ empirically in Figure~\ref{fig:edge_bound}.
The empirical results show that a number $\add{q}$ implies a smaller range of $|E|$ than suggested by Proposition~\ref{prop: shrinkage}, especially for a large number of vertices.
\begin{figure}[tb]
    \centerline{\includegraphics[width=\textwidth]{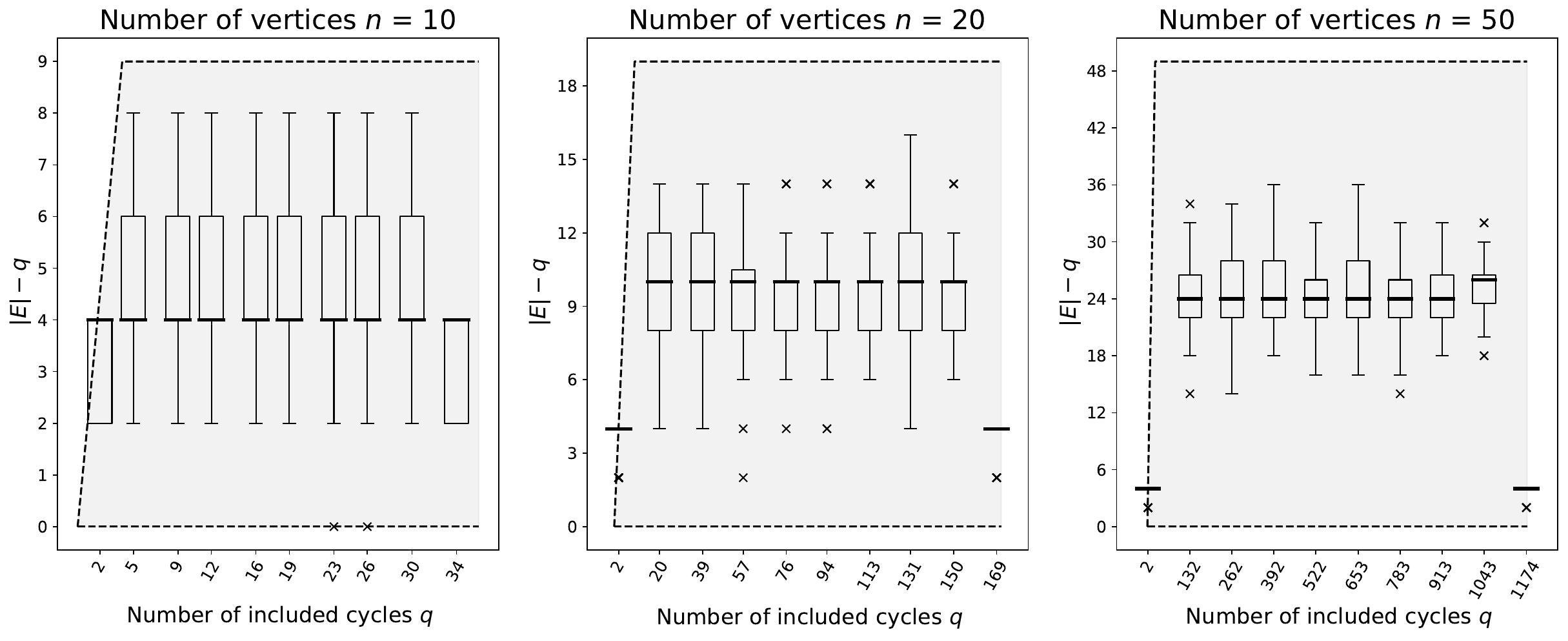}}
	\caption{Box plots of $|E|-\add{q}$ from 100 graphs $G$ generated by sampling $\add{q}$ elements from a basis $C$ generated by a star tree without replacement.
	The dashed lines mark the bounds $0 \leq |E|-\add{q} \leq 2 \min(\add{q},\, m)$ from Proposition~\ref{prop: shrinkage}.}
	\label{fig:edge_bound}
\end{figure}
\add{
Simulation studies in Section~\addd{\ref{sec:simul}} confirm that limiting $q$ also results in lower posterior edge inclusion probabilities, as does lowering the prior basis element inclusion probability $p$.
}

Apart from the prior process on cycle bases, we also consider the prior induced by edge unions of spanning trees in Supplementary Material.
However, we find that computation of
the induced
prior $p(G)$ is intractable since edge unions of spanning trees do not form a vector space.
We propose an algorithm to compute $p(G)$\add{, specifically to count the number of decompositions of a graph into spanning trees,} with
complexity $O(2^{|E|}\, n^3)$ which is an improvement over the
superexponential complexity of the
brute-force enumeration method. 
Further, we prove bounds to attempt an approximation of the prior ratio as appearing in a Metropolis-Hastings acceptance probability, but simulations shows that the bounds are too wide be of use (see Supplementary Material).

\addd{\section{Simulation studies with the cycle basis prior}
\label{sec:simul}

We conduct simulation studies to better understand what the effect of certain cycle basis priors on posterior inference is.
We do not apply these priors to the gene expression application in Section~\ref{sec5} as the decomposition of a graph into changing cycle bases in Step~\ref{supp:step:decomposition} of Algorithm~\ref{supp:alg:MCMC} in Supplementary Material is too expensive with $n=93$ vertices.
We simulate the $N\times n$ data matrix $X$ from the model $p(X\mid K)$ in \eqref{eq: x-given-K}
with the $n\times n$ precision matrix $K$
given by
$K_{ii}=1$ for $i=1,\dots,n$
and
$K_{12}=K_{13}=K_{23}=K_{34}=K_{35}=K_{45}=0.3$
while all other upper-triangular elements of $K$ equal zero.
This fully defines $K$ as it is a symmetric matrix.
Thus, the true underlying graph $G$ corresponds to the union of the two cycles $(1,2,3)$ and $(3,4,5)$.
We set $n=15$ as number of vertices and $N=150$ as number of observations.

We consider six different priors on graphs.
The first one is the edge basis prior with independent edge inclusions with probability $p=0.5$.
The next four priors are constrained to the cycle space $\mathcal{C}_n$ as follows.
The second prior is induced by the uniform prior $p(T)$ over all spanning trees
combined with the prior $p(G\mid T)$ resulting from a priori independent basis element inclusions with probability $p=0.5$ where the cycle basis is generated by spanning tree $T$.
The third prior uses the same uniform $p(G\mid T)$
but with $p(T)$ uniform over all star trees instead of the larger set of all spanning trees.
The fourth prior is the same as the third prior but with a different $p(G\mid T)$ where the a priori
basis element inclusion probability is $p=0.05$ instead of $0.5$.
The fifth prior is like the third prior but with $p(G\mid T)$ uniform over all graphs that consist of $q=3$ or fewer elements from the cycle basis generated by $T$.
Finally, the last prior is uniform over all decomposable graphs.

\begin{figure}[tbp]\add{
    \centerline{\includegraphics[width=\textwidth]{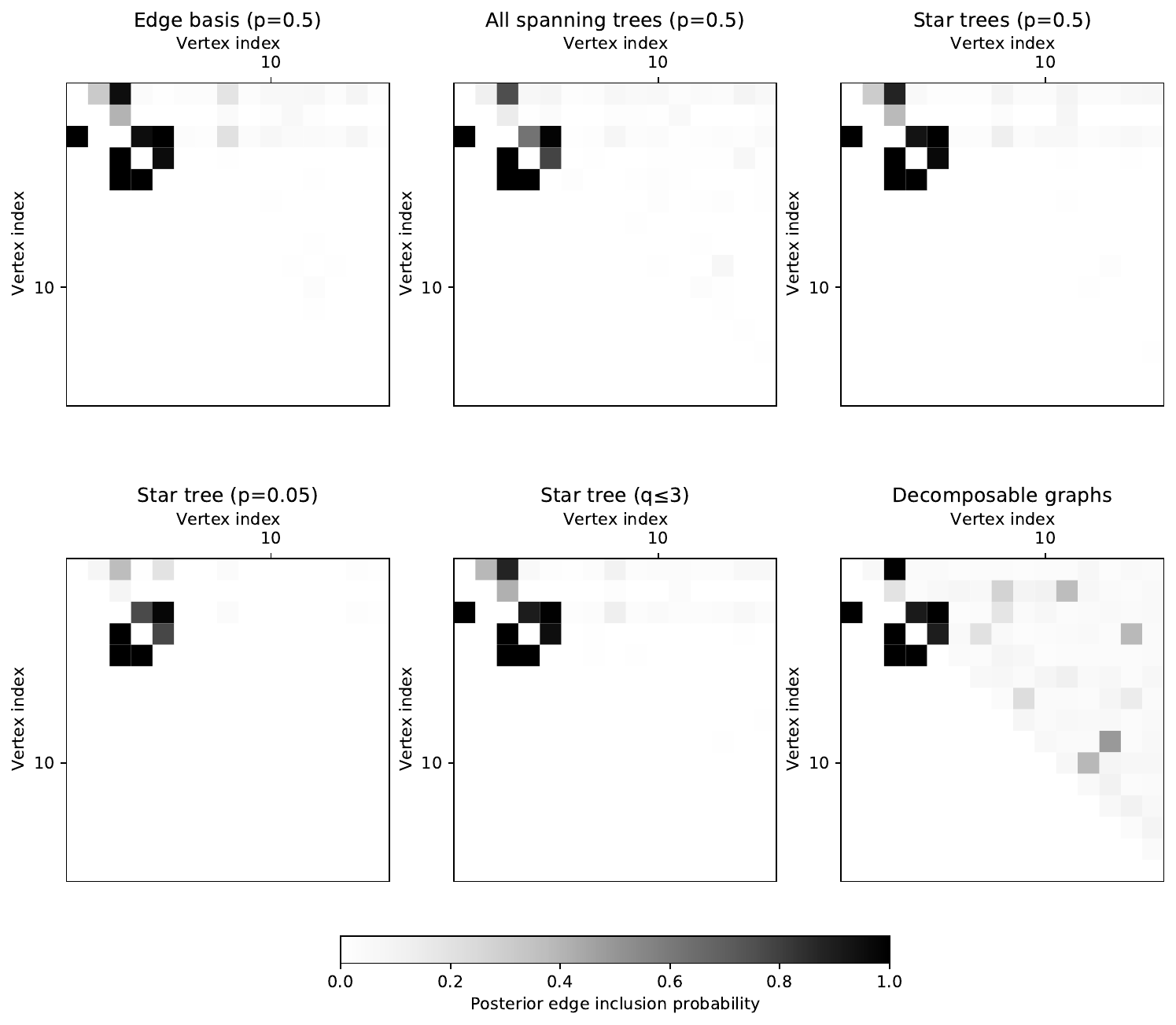}}
	\caption{\addd{The posterior edge inclusion probabilities (upper triangle) and the median probability graph (lower triangle) resulting from the six different priors considered for the simulated data.}}	\label{fig:simul}
}\end{figure}

We compute the posterior for the first five priors using Algorithm~\ref{supp:alg:MCMC} in Supplementary Material where we repeat Step~\ref{supp:step:update_G} nine times for each Step~\ref{supp:step:update_T}.
Step~\ref{supp:step:update_T} involves sampling from $p(T)$.
For the uniform prior over all spanning trees (the second prior),
we use the algorithm from \cite{aldous1990random} to sample from $p(T)$. Schild~\cite{Schild2018} provides a faster alternative.
We set the total number of MCMC steps, i.e.\ executions of Algorithm~\ref{supp:alg:MCMC},
to $10^5$ 
of which $10^4$ 
as burn-in.
Posterior computation for the prior on decomposable graphs is as described in Section~\ref{sec5}.
Figure~\ref{fig:simul} visualises the resulting posterior edge inclusion probabilities.
The probabilities are not notably different between the edge basis and the uniform prior over the cycle space $\mathcal{C}_n$ induced by considering all spanning trees or all star trees with $p=0.5$.

The median probability graphs of the first three priors plus the star tree prior with the number of basis elements $q$ capped at three ($q\leq 3$) have identical recovery of the true underlying graph with the cycle $(3,4,5)$ and edge $(1,3)$ correctly detected,
failure to detect $(1,2)$ and $(2,3)$,
and no false positives.
The extra prior regularisation in the fourth prior with $p=0.05$ results in edge $(1,3)$ not being detected.
The truncation $q\leq 3$ in the fifth prior results in lower inclusion probabilities:
an average probability of $0.054$ compared to $0.056$ for the third prior with $p=0.5$.
The fourth prior with $p=0.05$ yields yet a lower average probability at $0.033$. The prior on decomposable graphs produces an average of 0.11 with notably non-zero probabilities also for edges with endpoints outside of the first five nodes.}

\section{Application to gene expression data}
\label{sec5}

In this section, we restrict the graph space $\mathcal{G}_n$ to the cycle space $\mathcal{C}_n$ while inferring a gene expression network.
The data are gene expression profiles taken from tumours of breast cancer patients obtained from the \texttt{R} package \texttt{gRbase} \citep{dethlefsen2005common} preprocessed following the procedure in \cite{hojsgaard2012graphical} which yields $n=93$ genes of interest from $N=250$ patients.
For a further description of the data collection, see \cite{miller2005expression}. 

For this application, we consider \add{three priors on $\mathcal{G}_n$}: the \add{uniform prior on $\mathcal{G}_n$ corresponding to the} standard ``edge'' basis\add{, the uniform prior on $\mathcal{C}_n$ such as arising from a}
cycle basis
\add{(e.g.\ Proposition~\ref{prop:uniform}) and a uniform prior over all decomposable graphs.}
We use a Metropolis-Hastings algorithm for posterior inference where the proposal is to flip the presence of one basis element in $G$ \add{where we use the edge basis also for the prior on decomposable graphs}.
For the cycle \add{basis}, this proposal is guaranteed to be in $\mathcal{C}_n$ as a vector space is closed under addition.
See Section~\ref{supp:sec:mcmc_uniform} of Supplementary Material for details of the MCMC.
\addd{The computational cost is similar for the edge and cycle basis. Since the Metropolis-Hastings proposal that we use for decomposable graphs is not constrained to decomposable graphs, the rejection rate of that MCMC is high. We therefore run it for longer than the MCMC with the edge and cycle basis.}

\begin{figure}[tbp]
    \centerline{\includegraphics[width=\textwidth]{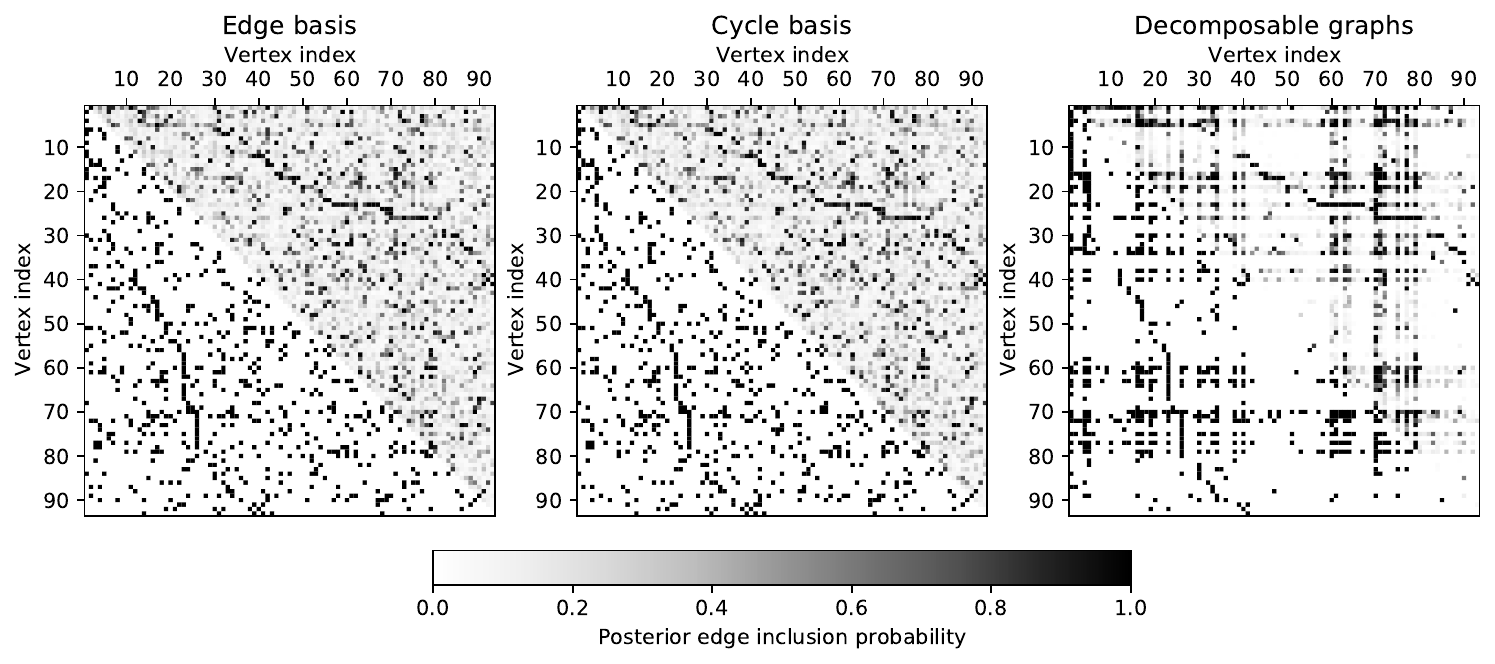}}
	\caption{The posterior edge inclusion probabilities (upper triangle) and the median probability graph (lower triangle) resulting from the \add{standard edge basis (left), cycle basis (centre) and the set of decomposable graphs (right)}.}
	\label{fig: breastcancer-comparison}
\end{figure}

\begin{figure}[tbp]
    \centerline{\includegraphics[width=.95\textwidth]{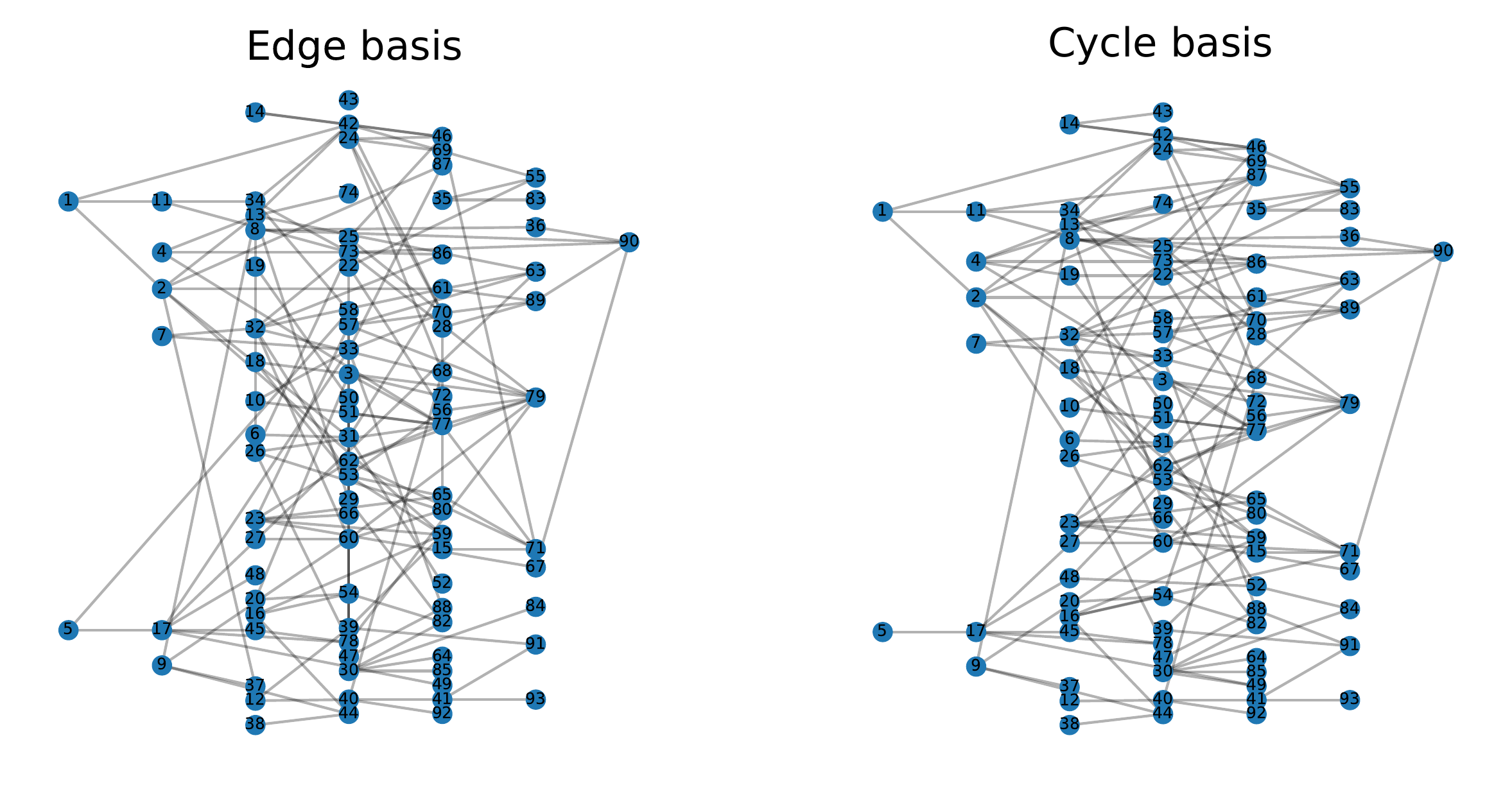}}
	\caption{Graphs obtained by including edges whose posterior inclusion probability is larger than $0.95$ when using the standard edge basis (left) and the cycle basis (right). \addd{The left graph has 141 edges and the right graph has 147 edges, 123 of which appear in both graphs.}}
	\label{fig: breastcancer-graph}
\end{figure}

\begin{table}[tbp]
\begin{center}
\add{
\caption{Minimum, average and maximum absolute difference in posterior edge inclusion probabilities resulting from 
the cycle basis and the set of decomposable graphs relative to the standard edge basis.}
\label{tab:breastcancer-comparison}
    \begin{tabular}{rccc}
    \hline
    & Minimum & Average & Maximum \\
    \hline
    Cycle basis & 0.00 & 0.04 & 0.33 \\
    Decomposable graphs & 0.00 & 0.19 & 1.00 \\
    \hline
    \end{tabular}
}
\end{center}
\end{table}

Figures~\ref{fig: breastcancer-comparison}
and \ref{fig: breastcancer-graph}\add{, and Table~\ref{tab:breastcancer-comparison}}
summarise the resulting graph inference.
There is no \add{major} 
difference between the posterior edge inclusion probabilities with the edge basis and the cycle basis.
This is despite the cycle space restriction with the cycle basis.
These results suggest that restricting inference to the cycle space $\mathcal{C}_n \subset \mathcal{G}_n$
does not substantially affect posterior inference from when the graph space $\mathcal{G}_n$ is not restricted.
\add{This contrasts with the results for decomposable graphs which are substantially different from both the edge and cycle basis results, reflecting that decomposability is a more severe restriction than the cycle space.}

\section{Discussion} \label{sec6}
In this paper, we introduce a generalisation of the edge inclusion prior based on vector spaces and,
in particular, investigate the cycle subspace. 
We present theoretical results about the cycle space and their bases.
While the results presented in Section~\add{\ref{sec2}} are not our own except for Proposition~\ref{thm:dense},
to the extent of our knowledge, this is the first time the idea of restricting inference to the cycle space is introduced in the graphical model literature.

We also study a novel prior based on assigning independent prior basis inclusion probabilities, proving
\add{its} 
degree distribution and shrinkage properties.
We implement an MCMC algorithm that samples from the cycle space. \add{Our algorithm is 
more straightforward to implement when compared to methods for decomposable graphs \citep{giudici1999decomposable,green2013sampling,Stingo2015},} by the fact that vector spaces are closed under addition.
\add{We show empirically} that studying a smaller but dense subset of the graph space does not significantly affect \add{inference of} posterior edge inclusion \add{probabilities}.

Moreover, this paper opens up an opportunity to various extensions of existing methodologies by considering alternative graph vector spaces such as induced by cycle bases as opposed to the edge basis. 
For example, the size-based prior of Armstrong et al.~\cite{armstrong2009bayesian} can be used to shrink to the number of basis elements.
Also, a birth-death proposal on the basis elements can be used instead of the proposal used in this paper that randomly switches one basis element selected from a uniform distribution on all elements.

Lastly, in this paper, we only considered standard GGMs. However, it may be possible to extend this method to other types of graphical models such as multiple graphs \citep{peterson2015bayesian, tan2017bayesian}, Gaussian copulas \citep{dobra2011copula, mohammadi2015bdgraph} and chain graphs \citep{sonntag2015chain, Lu2020}. 






\fund 
\noindent This work was supported by the Singapore Ministry of Education Academic Research Fund Tier~2 under Grant MOE2019-T2-2-100.

\competing 
\noindent There were no competing interests to declare which arose during the preparation or publication process of this article.

\data 
\noindent The data related to the application in Section~\ref{sec5} can be found at
\url{https://github.com/kristoforusbryant/cbmcmc/blob/main/breastcancer/data/breastcancer_data_93_250.csv}.

\supp \noindent The supplementary material for this article can be found at http://doi.org/10.1017/[TO BE SET]. 

%
%
%
%






\bibliographystyle{APT}
\bibliography{bibliography}

\end{document}


\title{Supplementary material for ``On a wider class of prior distribution for graphical models''}
\author{Abhinav Natarajan, Willem van den Boom, \\
Kristoforus Bryant Odang and Maria De Iorio}
\date{}
\maketitle

\section{Proofs for propositions from the main text}

\begin{proof}[Proof of Proposition~\ref{main:thm:dense}]
Note that the sum of degrees over every vertex of a graph is even.
Hence, $k$ is an even number, else the sum of the degrees would be odd.
Number the odd-degree vertices as $v_1, v_2, \dots, v_k$.
Now, add (using $\oplus$) edges $(v_i, v_{k/2 + i})$, $i = 1,2 \dots k/2$, to $G$ to obtain $G_C$.
By construction,
$G_C$
has only even degree vertices
such that $G_C\in\mathcal{C}_n$ by Property~\ref{main:p:veblen}.
Thus, there exists a graph in $\mathcal{C}_n$ that differs from $G$ by $k/2$ edges.

We prove the second part of the proposition by contradiction.
Assume that there exists
a graph $G_C\in\mathcal{C}_n$
that differs from $G$ in $l < k/2$ edges.
Flipping these $l$ edges in $G$ to obtain $G_C$
will affect the degree of at most $2l$ vertices since each edge involves two vertices. Thus, $G_C$ shares at least $k - 2l > 0$ odd-degree vertices with $G$
which
contradicts $G_C\in\mathcal{C}_n$
by Property~\ref{main:p:veblen}.
\end{proof}

\begin{proof}[Proof of Proposition~\ref{main:prop:uniform}]
Since the cycle space is a vector space with basis $C = \{c_1, \ldots, c_r\}$, each element of the cycle space is in bijective correspondence with a binary sequence of length $r$. Each binary element of this sequence corresponds to the inclusion of the respective cycle basis element. As the cycle inclusions are independent with probability $0.5$, every binary sequence is equally probable.
The marginal edge inclusion probability of $0.5$ follows from Proposition~\ref{main:prop:edge_prop} with $p=0.5$.
\end{proof}

\begin{proof}[Proof of Proposition~\ref{main:prop:edge_prop}]
Denote $\{c_1, \ldots, c_r\}$ by
$C_e$.
The edge $e$ is included if and only if an odd number of cycles from $C_e$ are included. Suppose $C_k \subset C_e$ is a subset with the odd number of elements $|C_k| = 2k+1$. Then, the probability that the cycles in $C_k$ are included and the cycles in $C_e \setminus C_k$ are not is
\[
p(C_k) = \left(\prod_{c_i \in C_k} p_i \right) \left(\prod_{c_i \in C_e \setminus C_k} (1-p_i)\right).
\]
By summing this probability over all $C_k \subset C_e$, we get the joint probability of edge $e$ and including exactly $2k+1$ cycles as
$\sum_{C_k \subset C_e} p(C_k)$.
Note that this is exactly the coefficient of $x^{2k+1}$ in $f(x)$. Then, summing this coefficient over $k=0,\dots,\lfloor (r-1)/2 \rfloor$ yields the desired result.

\add{
Next, consider the special case where $p=p_1=\dots=p_r$.
Then,
$f(x) = (1-p+px)^r = \sum_{i=0}^r \binom{r}{i} {(1-p)^{r-i}} (px)^i$
where the last equality follows from the binomial theorem.
Thus,
the probability of inclusion of the edge $e$ is
$\sum_{i \text{ odd}} \binom{r}{i} (1-p)^{r-i} p^i$.
By the binomial theorem,
$\sum_{i=0}^r \binom{r}{i} (1-p)^{r-i} p^i = (1 - p + p)^r = 1$
and
$\sum_{i=0}^r \binom{r}{i} (1-p)^{r-i} (-p)^i = (1 - p - p)^r = (1-2p)^r$.
Also,
$\sum_{i=0}^r \binom{r}{i} (1-p)^{r-i} p^i - \sum_{i=0}^r \binom{r}{i} (1-p)^{r-i} (-p)^i
= 2\sum_{i \text{ odd}} \binom{r}{i} (1-p)^{r-i} p^i$
since
\[
    (1-p)^{r-i} p^i - (1-p)^{r-i} (-p)^i = \begin{cases}
        0, & i\text{ even} \\
        2(1-p)^{r-i} p^i, & i\text{ odd}
    \end{cases}
\]
Therefore,
$\sum_{i \text{ odd}} \binom{r}{i} (1-p)^{r-i} p^i = \{1 - (1-2p)^r\}/2$.
}
\end{proof}

\add{
\begin{proof}[Proof of Corollary~\ref{main:cor:edge_prob_star}]
Let $\{v_0, \ldots, v_{n-1}\}$ be the vertices such that the star tree generating the basis $C$ is rooted at $v_0$. We refer to edges of the form $(v_i, v_j)$ where $i, j \neq 0$ as \emph{peripheral edges}. We refer to edges of the form $(v_0, v_i)$ as \emph{rooted edges}.

First, consider a rooted edge $(v_0,v_i)$.
Exactly $(n-2)$ edges in the complement $\overline{T}$
of $T$
have $v_i$ as an endpoint.
Each of these edges correspond to their own unique basis element in $C$.
No other basis elements in $C$ contain $(v_0,v_i)$.
Thus, $r=n-2$ basis elements contain $(v_0,v_i)$.
On the other hand,
each peripheral edge is in $\overline{T}$
and is thus contained in only $r=1$ basis element
due to the bijection between edges in $\overline{T}$
and elements of $C$.
The required result for a given $T$ now follows from Proposition~\ref{main:prop:edge_prop}.

For the result marginalising over the uniform distribution over all star trees,
note that there are $n$ star trees on $n$ vertices.
Each edge is rooted in 2 of these trees and peripheral in the other $(n-2)$. Thus, the marginal edge inclusion probability follows as
\begin{multline*}
    \Pr\{(i,j) \in E\}
    = \\ \Pr\{\text{$(i,j)$ is peripheral}\} \Pr\{(i,j) \in E \mid \text{$(i,j)$ is peripheral}\} \\
    +
    \Pr\{\text{$(i,j)$ is rooted}\} \Pr\{(i,j) \in E \mid \text{$(i,j)$ is rooted}\} \\
    = \frac{n-2}{n} p
    + \frac{2}{n} \{1-(1-2p)^{n-2}\}/2
\end{multline*}
which simplifies to the required result.
\end{proof}
}

\begin{proof}[Proof of Proposition~\ref{main:prop:edge_prop_joint}]
Using the independence of the cycle inclusions, it is seen that for any $I \subset \{1, \ldots, r\}$ the joint inclusion probability of the cycles $C_I = \{c_i : i \in I\}$ is given by the coefficient of $\prod_{i \in I} t_i$ in $f$. The inclusion of $C_I$ leads to the inclusion of the edges $\{e_{a(i)}, e_{b(i)} : i \in I\}$, notwithstanding cancellations due to intersecting cycles. The probability generating function $h \in \RR[x_1, \ldots, x_m]$ for the edge inclusions (with numerosity and ignoring binary cancellations) is therefore given by replacing each $t_i$ in $f$ with $x_{a(i)}x_{b(i)}$. Binary cancellations are then handled by reducing all exponents modulo $2$, which is equivalent to taking the image of $h$ in the quotient ring. 
\end{proof}

\begin{proof}[Proof of Proposition~\ref{main:prop:deg_dist}]
For odd $k$,
$\Pr\{\deg(v_i) = k\} = 0$ by Property~\ref{main:p:veblen}.
The remainder of this proof thus considers even $k$.

Note that each edge $(v_i, v_j)$, $i,j\ne 0$, is present in exactly one cycle basis element whose vertices are $\{v_0, v_i, v_j\}$. Therefore, the inclusion of edge $(v_i, v_j)$ is equivalent to the inclusion of the cycle $\{v_0, v_i, v_j\}$. Fix $i > 0$. If $v_i$ has degree $k > 0$, then it is connected to either (i) $k$ other vertices that are not the root vertex, or (ii) the root vertex and $k-1$ other vertices.
\begin{enumerate}[label=(\roman*)]
    \item Consider the first case. For each of the $k$ vertices connected to $v_i$ via an edge, we must have the inclusion of the corresponding cycle, and for each of the $n-2-k$ vertices not connected to $v_i$ via an edge, the corresponding cycle cannot be included. This happens with probability $p^k(1-p)^{n-2-k}$ where $p$ is the inclusion probability for any cycle basis element. The inclusion of exactly $k$ cycles containing the edge $(v_0, v_i)$ ensures that this edge is not included since $k$ is even. Since there are $\binom{n-2}{k}$ ways to pick $k$ non-root vertices that are connected to $v_i$, the probability of $v_i$ being connected to $k$ non-root vertices is $\binom{n-2}{k}p^k (1-p)^{n-2-k}$.
    \item In the second case, by a similar argument to the above, there are exactly $k-1$ cycles of the form $\{v_0, v_i, v_j\}$ that are included.
    This happens with probability $\binom{n-2}{k-1} p^{k-1}(1-p)^{n-k-1}$. 
    Since $k$ is even, this results in the edge $\{v_0, v_i\}$ being included.
\end{enumerate}
Adding the probabilities of (i) and (ii) yields the required result for $k \geq 2$.
For $k=0$, only (i) is possible such that
$\Pr\{\deg(v_i) = k\} = \binom{n-2}{k} p^k(1-p)^{n-2-k} = (1-p)^{n-2}$.
\end{proof}

\begin{proof}[Proof of Proposition~\ref{main:prop: shrinkage}]
\add{Recall the definition of peripheral and rooted edges from the proof of
Corollary~\ref{main:cor:edge_prob_star}.}
The lower bound $|E|\geq \add{q}$ follows from the observation that
there are exactly $\add{q}$ peripheral edges included in $G$.
The remainder of this proof thus considers the upper bound.

The maximum of three edges per cycle ($|E|=3\add{q}$) is attained when there are no intersections between the included cycles. Including a cycle $\{v_0, v_i, v_j\}$ corresponds to choosing a pair $\{v_i, v_j\}$ from the $(n-1)$ non-root vertices.
Avoiding intersections between included cycles corresponds to choosing all such pairs disjoint.
By the pigeonhole principle, there can be at most $m$ mutually disjoint pairs.
Thus,
$|E|=3\add{q}$ is attainable if and only if
$\add{q} \leq m$.
Also,
$|E|\leq 3\add{q}$ for $\add{q} \leq m$.

By Property~\ref{main:p:veblen}, the degree of $v_0$ and thus the number of rooted edges in $G$ must be even.
The largest even number less than the number $(n-1)$ of non-root vertices is $2m$.
Additionally, there are $\add{q}$ peripheral edges.
Thus,
$|E|\leq \add{q} + 2m$.
\end{proof}

\section{Edge union of spanning trees}
\label{sec:spanning_tree}

We explore the prior induced by edge unions of spanning trees in this section.
As seen in \add{\citet{hojsgaard2012graphical}, \citet{Schwaller2019} and \citet{duan2021bayesian}}, the spanning tree structure has proved relevant in graphical model inference.
We thus consider defining a graph prior $p(G \mid k)$ where
the graph $G=(V, E)$ is an edge union of $k$ spanning trees,
i.e.\
$E = \cup_{i = 1}^k T_i$
where $T_i$ is the set of edges of a spanning tree.
We assume that the inclusion of each spanning tree is equally likely and independent a priori.
Then, to infer the induced posterior distribution of the graphical model using Metropolis-Hastings, the following ratio needs to be computed:
$$\frac{p(E' =  \cup_{i = 1}^{k'} T_i\mid k')}{p(E = \cup_{i = 1}^k T_i\mid k)} = \frac{|Y(G', k')| / \binom{|\tau(G')|}{k'}}{|Y(G, k)| / \binom{|\tau(G)|}{k}}$$
where
the prime marks proposed values in the MCMC,
$|Y(G, k)|$ is the number of ways to write $G$ as the edge union of $k$ distinct spanning trees, $k' \in \{k - 1, k + 1\}$, and $|\tau(G)|$ is the number of all spanning trees of $G$. 
Note that by Kirchoff's matrix tree theorem (Theorem~\ref{thm: kmtt} \add{on page~\pageref{thm: kmtt}}), $|\tau(G)|$ can be computed in $O(n^3)$ time. Hence, the term left to be computed is the ratio $|Y(G', k')|/|Y(G, k)|$.

The remainder of this section is structured as follows.
Section~\ref{sec:notation} introduces notation.
Then,
Sections~\ref{sec:direct}, \ref{sec:lower} and \ref{sec:upper}
compute, lower bound and upper bound the ratio $|Y(G', k')|/|Y(G, k)|$, respectively.

\subsection{Notation}
\label{sec:notation}

In this Section~\ref{sec:spanning_tree}, all graphs considered are connected.
Let $K\in \mathcal{G}_n$ be the complete graph on the vertex set $V$. For any graph $G$, denote the set of spanning trees of $G$ by $\tau(G)$. For any $e\in E$, denote the set of spanning trees of $G$ that contain $e$ by $\tau_e(G)\subset \tau(G)$. Let $\binom{\tau(K)}{k}$ be the set with the $\binom{|\tau(K)|}{k}$ subsets of $\tau(K)$ of size $k$. There is a map 
\[
\Phi_k: \binom{\tau(K)}{k} \to \mathcal{G}_n
\]
given by taking the edge union of the $k$ trees in each element of $\binom{\tau(K)}{k}$.
For $G \in \mathrm{im}(\Phi_k) = \{\Phi_k(\mathcal{T}) : \mathcal{T} \in  \binom{\tau(K)}{k}\}$,
define
the set of combinations of $k$ spanning trees that yield $G$ as
$Y(G, k) = \Phi_k^{-1}(G) = \{\mathcal{T} : \Phi_k(\mathcal{T}) = G \}$.
Let $G_1 + G_2$ denote the graph obtained by taking the edge union of graphs $G_1$ and $G_2$.
Our main aim is to estimate the ratio
\begin{equation}\label{ratio}
\frac{|Y(G', k')|}{|Y(G, k)|} =
\frac{|Y(G+T_0, k+1)|}{|Y(G, k)|}
\end{equation}
for a fixed $G \in \mathrm{im}(\Phi_k)$ and $T_0 \in \tau(K)$.
Note that here we define $G' = G+T_0$
and $k' = k + 1$.

\subsection{Direct approach}
\label{sec:direct}

One approach to estimate this ratio is to compute $|Y(G, k)|$ and $|Y(G', k')|$ separately. 
First, consider the brute force strategy that enumerates every tree in $G$. 
This is infeasible given how the number of spanning trees $|\tau(G)|$ grows super-exponentially with $m = |E|$ \citep{greenhill2017average} and the enumeration algorithm has complexity $O(n + m + |\tau(G)|)$
\citep{kapoor2000algorithm}.
In this section, we show an algorithm that improves on the super-exponential brute-force algorithm by counting the number of ways to write $G$ as a union of its spanning trees in $O(2^m\, n^{3})$ time. 

\begin{lemma} \label{lem:1}
Every edge union of $k$ spanning trees of $G$ is either equal to $G$ or is an edge-induced subgraph of $G$ that spans $V$ and is connected.
\end{lemma}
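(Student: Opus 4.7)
The plan is to unpack the definitions and observe that the three properties (subgraph of $G$, spanning $V$, connected) all follow immediately from the corresponding properties of a single spanning tree. The ``edge-induced'' qualifier is really just saying that when we take an edge union we do not introduce extra isolated vertices, which is automatic.

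Concretely, I would fix $k$ spanning trees $T_1, \ldots, T_k \in \tau(G)$ and let $H = T_1 + T_2 + \cdots + T_k$ with edge set $E_H = E(T_1) \cup \cdots \cup E(T_k)$. Since each $T_i$ is a spanning tree of $G$, we have $E(T_i) \subseteq E(G)$, hence $E_H \subseteq E(G)$. If $E_H = E(G)$, then $H = G$ and we are done with the first case of the dichotomy. Otherwise $E_H \subsetneq E(G)$, and I need to verify the three properties in the second case.

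Spanning: each $T_i$ has vertex set $V$, so the endpoints of the edges in $E_H$ together cover $V$, which means $V(H) = V$ and $H$ is the edge-induced subgraph of $G$ on the edge set $E_H$. Connectedness: $T_1 \subseteq H$ is already a connected spanning subgraph of $H$, so any two vertices are joined by a path in $T_1$ and hence in $H$. Edge-induced subgraph of $G$: by construction $H$ consists precisely of the edges in $E_H \subseteq E(G)$ together with their endpoints, which is the definition of the edge-induced subgraph of $G$ associated with $E_H$.

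There is no real obstacle here; the whole statement reduces to the observation that any single spanning tree of $G$ is already a connected spanning subgraph contained in $G$, and taking additional edge unions with further spanning trees can only add edges of $G$ without destroying connectedness or the spanning property. The only point worth stating carefully is that $V(H) = V$, which justifies calling $H$ an \emph{edge-induced} subgraph rather than merely an arbitrary subgraph of $G$.
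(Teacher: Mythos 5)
Your proof is correct and amounts to the same elementary observation as the paper's, which disposes of the lemma in one line by noting that the negation would force the union to contain an edge outside $G$. Your direct verification of the three properties (edge containment, spanning via $V(T_i)=V$, and connectedness inherited from the single tree $T_1$) is in fact more explicit than the paper's terse contradiction argument, which leaves the spanning and connectedness claims implicit; no gap either way.
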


\begin{proof}
This follows by contradiction since
the negation would mean that the spanning tree union contains an edge not in $G$. 
\end{proof}

\begin{proposition} \label{prop:recursive}
Let
$G_C$ denote any connected graph with as set of edges a proper subset of the edge set $E$ of $G$. Then,
\begin{align}
  |Y(G,k)| = \binom{|\tau(G)|}{k} - \sum_{G_C} |Y(G_C,k)|
\end{align}
if $|\tau(G)| \geq k$ and $|Y(G,k)| = 0$ otherwise. 
\end{proposition}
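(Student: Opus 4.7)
The plan is to read both sides of the identity as counts derived from a single partition of the family $\binom{\tau(G)}{k}$ of $k$-element subsets of spanning trees of $G$. The left-hand side $\binom{|\tau(G)|}{k}$ is just the cardinality of this family. By Lemma~\ref{lem:1}, the edge union of any such $k$-subset lies in $\{G\}\cup\mathcal{S}$, where $\mathcal{S}$ is the collection of proper connected spanning subgraphs of $G$. So the partition of $\binom{\tau(G)}{k}$ by its image under $\Phi_k$ has one block of size $|Y(G,k)|$ over $G$ and one block over each $G_C\in\mathcal{S}$, and summing sizes will yield the recursion after isolating $|Y(G,k)|$.

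First I would dispose of the edge case $|\tau(G)|<k$. Any element of $Y(G,k)\subset\binom{\tau(K)}{k}$ has edge union $G$, which forces each of its $k$ constituent spanning trees to be edge-contained in $G$, hence to be a spanning tree of $G$. So $Y(G,k)\subset\binom{\tau(G)}{k}$, which is empty when $|\tau(G)|<k$.

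Next, assuming $|\tau(G)|\geq k$, I would verify that the block over each $G_C\in\mathcal{S}$ is exactly $Y(G_C,k)$, rather than merely a restriction of it. The key observation is that for any spanning subgraph $G_C$ of $G$, every spanning tree of $G_C$ is also a spanning tree of $G$, so $\tau(G_C)\subset\tau(G)$; conversely, by the same argument used for $G$, any $k$-subset of $\tau(K)$ whose edge union equals $G_C$ must consist entirely of spanning trees of $G_C$. Hence
\[
Y(G_C,k) = \Phi_k^{-1}(G_C) \subset \binom{\tau(G_C)}{k} \subset \binom{\tau(G)}{k},
\]
so the fiber of $\Phi_k$ over $G_C$ within $\binom{\tau(G)}{k}$ is $Y(G_C,k)$ itself. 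Summing the fiber cardinalities over the partition gives $\binom{|\tau(G)|}{k} = |Y(G,k)| + \sum_{G_C\in\mathcal{S}}|Y(G_C,k)|$, which is the claim.

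The one piece of bookkeeping to be careful about is precisely this identification of $Y(G_C,k)$, as defined via the global domain $\binom{\tau(K)}{k}$, with the block over $G_C$ inside $\binom{\tau(G)}{k}$; the containment argument above makes this immediate and is the only place Lemma~\ref{lem:1} is needed beyond justifying that the index set of $G_C$'s is exactly the proper connected spanning subgraphs of $G$.
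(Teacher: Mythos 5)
Your proof is correct and follows essentially the same route as the paper's: partitioning $\binom{\tau(G)}{k}$ by the edge union under $\Phi_k$ and invoking Lemma~\ref{lem:1} to identify the possible images as $G$ itself or a proper connected spanning subgraph $G_C$. The paper's own proof is a two-sentence sketch of exactly this argument; you simply make explicit the details it leaves implicit, namely the identification of the globally defined fibers $Y(G_C,k)\subset\binom{\tau(K)}{k}$ with the blocks inside $\binom{\tau(G)}{k}$ (via $Y(G_C,k)\subset\binom{\tau(G_C)}{k}\subset\binom{\tau(G)}{k}$) and the degenerate case $|\tau(G)|<k$.
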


\begin{proof}
Note that $\binom{|\tau(G)|}{k}$ is the number of possible combinations of $k$ distinct spanning trees of $G$. Further, by Lemma~\ref{lem:1}, any union of spanning trees of $G$ that is connected and not equal to $G$ is a connected edge-induced subset of $G$.
\end{proof}

Hence, having proved the recursive formula in (1), we analyse the complexity of its calculation. \add{We recall the following proposition due to Kirchoff (see \citeauthor{chaiken1978matrix} (\citeyear{chaiken1978matrix}) for a reference).}
\begin{theorem}[Matrix tree theorem]\label{thm: kmtt}
The number $|\tau(G)|$ of spanning trees of $G$ is equal to any cofactor of the Laplacian matrix of $G$.
\end{theorem}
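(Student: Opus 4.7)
The plan is to prove this by reducing a cofactor of the Laplacian to a sum over edge subsets via the Cauchy--Binet formula, then identifying which subsets contribute. Fix an arbitrary orientation of the edges of $G$ and let $B \in \RR^{n \times m}$ be the signed incidence matrix defined by $B_{v,e} = +1$ if $v$ is the tail of $e$, $-1$ if $v$ is the head, and $0$ otherwise. A direct check of entries gives $L = B B^\top$. The $(i,i)$ cofactor of $L$ equals $\det(\tilde{L})$, where $\tilde{L}$ is obtained by deleting row $i$ and column $i$ of $L$; equivalently, $\tilde{L} = \tilde{B}\tilde{B}^\top$ where $\tilde{B}$ is $B$ with row $i$ removed.

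Next I would apply the Cauchy--Binet formula to write
\[
\det(\tilde{L}) \;=\; \sum_{S \subset E,\ |S| = n-1} \det(\tilde{B}_S)^2,
\]
where $\tilde{B}_S$ denotes the square $(n-1) \times (n-1)$ submatrix of $\tilde{B}$ obtained by selecting the columns indexed by the edge set $S$. The remaining task is to show that $\det(\tilde{B}_S) \in \{-1, 0, +1\}$, with value $\pm 1$ if and only if the edges in $S$ form a spanning tree of $G$.

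For the characterization, I would argue in two cases. If $(V, S)$ contains a cycle, then summing the corresponding columns of $\tilde{B}_S$ with the $\pm 1$ signs induced by traversing the cycle yields the zero vector, so the columns are linearly dependent and $\det(\tilde{B}_S) = 0$. Otherwise $|S| = n-1$ together with acyclicity forces $(V,S)$ to be a spanning tree. In that case I would induct on $n$ by picking a leaf $v \neq v_i$: the row of $\tilde{B}_S$ indexed by $v$ has a single nonzero entry $\pm 1$, and expanding the determinant along that row reduces to the reduced incidence matrix of the tree obtained by removing $v$ and its unique incident edge, giving $\det(\tilde{B}_S) = \pm 1$ in the end. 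Substituting back into the Cauchy--Binet sum yields $\det(\tilde{L}) = |\tau(G)|$.

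The main obstacle is the unimodularity/acyclicity dichotomy in the previous paragraph, as it is where the combinatorial content of the theorem really enters; the Cauchy--Binet step is purely formal once that is established. Finally, to extend the conclusion from diagonal cofactors to \emph{any} cofactor, I would use the standard kernel argument: $L$ is singular with kernel spanned by the all-ones vector, so from $L \cdot \mathrm{adj}(L) = \det(L)\, I = 0$ every column of $\mathrm{adj}(L)$ lies in that kernel, and by symmetry so does every row. Hence $\mathrm{adj}(L)$ has all entries equal, which forces all signed cofactors of $L$ to coincide and completes the proof.
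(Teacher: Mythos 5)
Your proof is correct, but note that the paper does not actually prove this statement at all: Theorem~\ref{thm: kmtt} is invoked as a classical result with a citation to \citet{chaiken1978matrix} (whose paper in fact establishes the more general all-minors matrix tree theorem), and it is used in the supplement only as a black box to justify the $O(n^3)$ cost of computing $|\tau(G)|$ and the weighted enumerator $|\lambda(G,T_0)|$. Your argument is the standard and complete Cauchy--Binet proof: the factorization $L = BB^\top$, the identity $\det(\tilde{L}) = \sum_{|S|=n-1} \det(\tilde{B}_S)^2$, the dichotomy that $\det(\tilde{B}_S) = 0$ when $S$ contains a cycle (signed column sum along the cycle vanishes) and $\det(\tilde{B}_S) = \pm 1$ when $S$ is a spanning tree (leaf-expansion induction, using that a tree on at least two vertices has a leaf distinct from the deleted vertex $v_i$), and the adjugate argument $L\,\mathrm{adj}(L) = 0$ to propagate from diagonal cofactors to all cofactors. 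All of these steps are sound. One small caveat: your kernel argument asserts that $\ker L$ is spanned by the all-ones vector, which holds only for connected $G$; this matches the paper's standing assumption in Section~\ref{sec:notation} that all graphs in that section are connected, but for completeness you might note that the disconnected case is trivial anyway, since then $\mathrm{rank}(L) \leq n-2$, so $\mathrm{adj}(L) = 0$ and there are no spanning trees, making both sides zero. A natural extension worth mentioning, since the paper actually needs it in Section~\ref{sec:lower}, is that the identical argument with weighted edges (each column of $B$ scaled by $\sqrt{w(e)}$, or equivalently Cauchy--Binet applied to $B W B^\top$) yields the weighted tree enumerator version used for computing $|\lambda(G, T_0)|$ and $|\tau_e(G')|$.
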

This theorem implies that counting $\tau(G)$ is equivalent to finding the determinant of an $(n-1) \times (n-1)$ matrix which has complexity $O(n^3)$. 

Now, we prove that we can compute $|Y(G,k)|$ in $O(2^m\, n^{3})$ time. 

\begin{proposition}
$|Y(G,k)|$ can be calculated in $O(2^m\, n^{3})$ time. 
\end{proposition}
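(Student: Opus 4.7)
The plan is to invert the recursion of Proposition~\ref{prop:recursive} via M\"obius inversion on the Boolean lattice of edge subsets of $G = (V, E)$, which turns the self-referential formula into a single explicit sum over the $2^m$ spanning subgraphs of $G$.

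First, I would observe that $|Y(H, k)| = 0$ whenever the spanning subgraph $H = (V, S)$ is disconnected: every spanning tree of the complete graph $K$ is connected on $V$, hence so is the edge union of any collection of them, so a disconnected $H$ simply cannot lie in the image of $\Phi_k$. Applying Proposition~\ref{prop:recursive} with $G_S = (V, S)$ in place of $G$ for each $S \subseteq E$ and moving $|Y(G_S, k)|$ to the left-hand side, the recursion takes the uniform shape
\begin{equation*}
\binom{|\tau(G_S)|}{k} = \sum_{S' \subseteq S} |Y(G_{S'}, k)|,
\end{equation*}
where the sum now ranges over \emph{all} $2^{|S|}$ subsets $S'$ of $S$, the disconnected $G_{S'}$ contributing zero and therefore not disturbing the original connected-subgraph recursion.

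Next, I would apply standard M\"obius inversion on the Boolean lattice of subsets of $E$ to the above identity, obtaining the closed form
\begin{equation*}
|Y(G, k)| = \sum_{S \subseteq E} (-1)^{|E| - |S|} \binom{|\tau(G_S)|}{k}.
\end{equation*}
For each of the $2^m$ subsets $S$, Theorem~\ref{thm: kmtt} evaluates $|\tau(G_S)|$ as a cofactor of the $(n-1) \times (n-1)$ Laplacian of $G_S$ in $O(n^3)$ arithmetic operations; the subsequent computation of $\binom{|\tau(G_S)|}{k}$ and the signed accumulation are of lower order. Summing over subsets gives total cost $O(2^m n^3)$, as claimed.

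The main obstacle is justifying the transition from the connected-spanning-subgraph sum of Proposition~\ref{prop:recursive} to an unrestricted M\"obius sum over the full Boolean lattice; this rests on the vanishing of $|Y(H, k)|$ for disconnected $H$, after which inversion is mechanical. It is also worth noting that a naive top-down unfolding of the recursion in Proposition~\ref{prop:recursive} that looks up $|Y(G_{S'}, k)|$ for each proper subset $S' \subsetneq S$ would incur $\Theta(3^m)$ lookups in the worst case and thus miss the stated bound for large $m$; it is the M\"obius inversion route that collapses the work to a single pass over the $2^m$ edge subsets.
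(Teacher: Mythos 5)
Your proof is correct, and it takes a genuinely different route from the paper's. The paper establishes the bound by running the recursion of Proposition~\ref{prop:recursive} top-down with memoisation: there are at most $2^m$ edge-induced connected subgraphs $G_C$, each is solved once, and each costs one $O(n^3)$ determinant via Theorem~\ref{thm: kmtt}. You instead solve the recursion in closed form: after extending $Y(\cdot, k)$ by zero to disconnected spanning subgraphs (justified correctly, since any union of spanning trees of $K$ is connected), the recursion becomes the zeta-transform identity $\binom{|\tau(G_S)|}{k} = \sum_{S' \subseteq S} |Y(G_{S'}, k)|$, and M\"obius inversion on the Boolean lattice gives $|Y(G,k)| = \sum_{S \subseteq E} (-1)^{m - |S|} \binom{|\tau(G_S)|}{k}$, evaluated in a single pass of $2^m$ determinant computations. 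What your route buys is exactly the point you flag at the end, and it is a fair criticism of the paper's own accounting: the memoised recursion must, at each subproblem $S$, accumulate the sum $\sum_{G_C} |Y(G_C, k)|$ over connected proper subgraphs, and the total number of nested pairs $S' \subsetneq S$ is $\Theta(3^m)$ in the worst case (for dense $G$, most such pairs are spanning and connected), which exceeds $2^m n^3$ once $(3/2)^m \gg n^3$; the paper's proof charges each subproblem only its determinant and is silent on this combination cost. Your single alternating sum --- or, equivalently, computing the M\"obius transform of $f(S) = \binom{|\tau(G_S)|}{k}$ by the fast subset-transform in $O(2^m m)$ additional operations, which would also repair the paper's DP --- genuinely delivers the stated $O(2^m n^3)$. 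One caveat applies equally to both proofs: the complexity is in the unit-cost arithmetic model (the integers $|\tau(G_S)|$ have $O(n \log n)$ bits, and evaluating $\binom{|\tau(G_S)|}{k}$ takes $O(k)$ arithmetic operations per subset), so this does not separate your argument from the paper's.
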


\begin{proof}
Consider the recursive scheme implied by Proposition~\ref{prop:recursive}.
Since every connected edge-induced subgraph of every $G_C$ is also an edge-induced subgraph of $G$, using dynamic programming and memoisation, we only need to calculate every unique $|Y(G_C, k)|$ of which there are at most $2^m$. Calculating $|\tau(G)|$ at each step involves calculating a determinant, per Theorem~\ref{thm: kmtt}, which is $O(n^3)$. Hence, the overall complexity is $O(2^m\, n^3)$.
\end{proof}

Since $G$ is constrained to be the union of only $k$ spanning trees with each $(n-1)$ edges, $m \leq k (n - 1)$. Hence, the algorithm is $O(2^{kn}\, n^3)$.

\subsection{Lower bound}
\label{sec:lower}

In this section, we derive a lower bound for \eqref{ratio}.
Consider the equation
\begin{equation}\label{eqn1}
    G' = G + T_0 = G + T
\end{equation} 
for unknown $T \in \tau(K)$ and known $T_0$. Let the set of solutions $T$ of \eqref{eqn1} be $\lambda(G, T_0)$. It is clear that 
\[
Y(G+T_0, k+1) \supset \left\{S \cup \{T\} : S \in Y(G, k),\, T \in \lambda(G, T_0) \right\}
\]
and hence 
\[
|Y(G+T_0, k+1)| \geq |Y(G, k)| |\lambda(G, T_0)| \implies \frac{|Y(G+T_0, k+1)|}{|Y(G, k)|} \geq |\lambda(G, T_0)|.
\]
We therefore start by computing $|\lambda(G, T_0)|$. Consider the equivalence relation $\sim$ on $V$ given by
two edges being connected
in $G'-G$ (whose vertices are $V$ and edges are $E'\setminus E$). Let $G'' = (\mathcal{V}, \mathcal{E})$ be the quotient graph $G' / \sim$. Let $\pi :G' \to G''$ denote the quotient map. 

\begin{lemma}
Suppose $C$ is a simple cycle of $G'$. Then, $\pi(C)$ is either a single point, a single edge connecting a pair of vertices, or a simple cycle of $G''$.
\end{lemma}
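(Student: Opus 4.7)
The plan is to exploit the observation that $G'-G$ is a subforest of the spanning tree $T_0$. Since $G' = G + T_0$ forces $E' \setminus E \subseteq T_0$, the graph $G'-G$ is a forest, and the equivalence classes of $\sim$ are precisely its connected components; each class is a subtree of $T_0$. This structural fact — that classes are trees, not arbitrary subgraphs — will drive the argument.

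Write $C = v_0 v_1 \cdots v_{\ell-1} v_0$ and consider the induced closed walk $[v_0] [v_1] \cdots [v_{\ell-1}] [v_0]$ in $G''$. Any edge of $C$ that lies in $E' \setminus E$ joins two vertices within one equivalence class, so it projects to a loop and can be collapsed from the walk; any edge of $C$ in $E$ either stays within one class or connects two distinct classes. I would then case-split on the number $s$ of distinct classes appearing along $C$. If $s = 1$ then $\pi(C)$ is a single point; if $s = 2$ then the reduced walk must alternate between the two classes and $\pi(C)$ traces a single edge of $G''$ back and forth; if $s \geq 3$ the aim is to show that, after collapsing consecutive duplicates, the sequence of classes is a simple cycle in $G''$.

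For the $s \geq 3$ case the crux is to show that each equivalence class appears as a single contiguous arc along $C$. I would proceed by contradiction: suppose $v_p, v_q \in V(C)$ lie in the same class $[u]$ but the two arcs of $C$ between $v_p$ and $v_q$ each contain a vertex outside $[u]$. Since $v_p$ and $v_q$ lie in the same component of the forest $G'-G$, there is a unique path $P$ in that forest (and hence in $T_0$) joining them. Combining $P$ with each arc of $C$ yields closed walks in $G'$ that share only the endpoints $v_p$ and $v_q$, and analysing the resulting subgraph should yield a contradiction with either the simplicity of $C$, the forest property of $G'-G$, or the tree property of $T_0$ --- for instance, by exhibiting a cycle inside $T_0$.

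The hardest step will be making this contradiction rigorous, since the interaction between $C$ (living in $G'$) and $P$ (living in the contracted forest) is where the argument has to bite. A cleaner alternative may be a topological argument: because each component of $G'-G$ is a tree and hence contractible, the quotient map $G' \to G''$ is a homotopy equivalence, so an embedded circle $C \hookrightarrow G'$ should project to a closed curve in $G''$ whose combinatorial type is constrained by how $C$ enters and leaves each contracted subtree. Turning that intuition into a careful combinatorial count of entries and exits --- and ruling out the degenerate configurations that are not point, edge, or simple cycle --- is, I expect, the main technical obstacle.
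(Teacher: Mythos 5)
Your plan breaks at exactly the step you flag as the hardest, and it breaks irreparably: the contiguity claim in your $s \geq 3$ case is false, and no contradiction of the kind you hope for exists, because the lemma itself is false as stated for arbitrary simple cycles of $G'$. Take $V = \{a, b, c, x, y\}$, let $G$ have edges $(a,x), (x,c), (c,y), (y,a), (b,x)$ (this is the union of the two spanning trees $\{(a,x),(x,c),(c,y),(b,x)\}$ and $\{(y,a),(a,x),(x,c),(b,x)\}$, so $G \in \mathrm{im}(\Phi_2)$), and let $T_0 = \{(a,b),(b,c),(b,x),(c,y)\}$. Then $E' \setminus E = \{(a,b),(b,c)\}$, so $G'-G$ is the path $a\!-\!b\!-\!c$ and the equivalence classes are $\{a,b,c\}$, $\{x\}$, $\{y\}$. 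The $4$-cycle $C = a, x, c, y, a$ is a simple cycle of $G'$, the class $[a]$ appears at the two non-adjacent positions $a$ and $c$, and $\pi(C)$ is the two-edge path $[x]\!-\![a]\!-\![y]$: not a point, not a single edge, and not a simple cycle. Tracing your intended contradiction through this example shows why it cannot be made rigorous: gluing the forest path $P = a\!-\!b\!-\!c$ onto the two arcs of $C$ merely produces the ordinary cycles $a,b,c,x$ and $a,b,c,y$ of $G'$, which violate neither the simplicity of $C$, nor the acyclicity of $G'-G$, nor the tree property of $T_0$. Your topological fallback fails for the same reason: collapsing the (contractible) components of $G'-G$ is indeed a homotopy equivalence, but homotopy equivalences preserve homotopy classes, not embeddedness, so the embedded circle $C$ need not map to an embedded circle.

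For comparison with the paper: the paper states this lemma with no proof at all, so there is no argument to measure yours against; your attempt in fact exposes that none could be supplied. What saves the paper is that the lemma is only ever invoked, in the proof of Lemma~\ref{lambda is weighted tree enumerator}, for simple cycles $C$ of the specially constructed graph $T$, in which the edges outside $G'-G$ are \emph{single} chosen representatives of edges of a spanning tree $\mathcal{T}$ of $G''$. In that restricted setting one can argue directly and skip the trichotomy entirely: the crossing edges of $C$ project injectively to edges of the forest $\mathcal{T}$, the image of $C$ under $\pi$ is a closed walk in $\mathcal{T}$ (after discarding loops at contracted components), and a closed walk in a forest traverses every edge an even number of times, so each edge is traversed zero times and the walk is trivial. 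Hence $\pi(C)$ is a single point, so $C$ lies inside one class, i.e.\ inside the forest $G'-G$ --- a contradiction, which establishes the acyclicity of $T$ that the trichotomy was meant to deliver. If you want a correct statement to prove, prove that restricted parity claim rather than the lemma as written; note that even adding the hypothesis that $C$ uses at most one edge between any two classes does not rescue the trichotomy, since a large class entered and exited four times can still make $\pi(C)$ a figure-eight.
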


\begin{lemma}\label{lambda is weighted tree enumerator}
Suppose 
\[
|\lambda(G, T_0)| = \sum_{\mathcal{T} \in \tau(G'')}\prod_{e \in \mathcal{E}(\mathcal{T})} |\pi^{-1}(e)|
\]
where $\mathcal{E}(\mathcal{T})$ is the edge set of $\mathcal{T}$.
\end{lemma}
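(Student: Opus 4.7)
The plan is to establish a bijection between $\lambda(G, T_0)$ and the set of pairs $(\mathcal{T}, \phi)$, where $\mathcal{T} \in \tau(G'')$ and $\phi$ assigns to each edge $e \in \mathcal{E}(\mathcal{T})$ some element of $\pi^{-1}(e)$. Counting such pairs immediately yields the stated formula, since for a fixed $\mathcal{T}$ there are exactly $\prod_{e \in \mathcal{E}(\mathcal{T})} |\pi^{-1}(e)|$ choice functions $\phi$.

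First I would unfold what membership in $\lambda(G, T_0)$ means. A spanning tree $T \in \tau(K)$ satisfies $G + T = G'$ if and only if $T \subseteq E'$ (no new edges appear) and $T \supseteq E' \setminus E$ (every edge of $G'$ missing from $G$ must come from $T$). In particular, since $T_0$ itself contains $E' \setminus E$, the subgraph $(V, E' \setminus E)$ is a forest, and its connected components are precisely the equivalence classes of $\sim$. Now, for each such $T$, write $T_{\mathrm{ext}} = T \cap E$. Contracting along each equivalence class corresponds to contracting a subtree of $T$, which preserves both connectivity and acyclicity. Therefore $\pi(T_{\mathrm{ext}})$, obtained by discarding the loops of $\pi(T)$, is a spanning tree of $G''$; moreover two distinct edges of $T_{\mathrm{ext}}$ cannot map to the same edge of $G''$, for otherwise joining their endpoints inside their respective classes would produce a cycle in $T$. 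Thus $T \mapsto (\pi(T_{\mathrm{ext}}), \, e \mapsto \pi^{-1}(e) \cap T_{\mathrm{ext}})$ is a well-defined map into the desired set of pairs.

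Conversely, given $(\mathcal{T}, \phi)$, I would set $T = (E' \setminus E) \cup \phi(\mathcal{E}(\mathcal{T}))$. Any simple cycle in $T$ would, by the preceding lemma on $\pi$-images of simple cycles in $G'$, descend to a cycle in $\mathcal{T}$ (it cannot collapse to a single vertex or edge, since the $\phi$-lifted edges connect distinct equivalence classes), contradicting $\mathcal{T} \in \tau(G'')$. Connectivity of $T$ follows because any two vertices in the same class are joined through $E' \setminus E$, and any two classes are joined by the spanning tree $\mathcal{T}$ lifted through $\phi$. Since $T \subseteq E'$ and $T \supseteq E' \setminus E$ by construction, $T \in \lambda(G, T_0)$, and the two constructions are easily checked to be mutually inverse. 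The main obstacle I anticipate is the acyclicity half of this lifting step, which relies essentially on the preceding lemma; once that topological bookkeeping is in place, the counting is immediate.
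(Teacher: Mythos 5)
Your proof is correct and takes essentially the same approach as the paper's: both build $T$ from a spanning tree $\mathcal{T}\in\tau(G'')$ together with a choice of representative in $\pi^{-1}(e)$ for each edge $e\in\mathcal{E}(\mathcal{T})$ plus all of $E'\setminus E$, rule out cycles via the preceding lemma on $\pi$-images of simple cycles (using that $E'\setminus E$ is a forest contained in $T_0$ and that only one lift of each edge of $\mathcal{T}$ is present), and verify that every $T\in\lambda(G,T_0)$ arises uniquely this way. Your explicit unfolding of $G+T=G'$ as $E'\setminus E\subseteq T\subseteq E'$ and the packaging of the count as a bijection with pairs $(\mathcal{T},\phi)$ merely make precise what the paper's proof leaves implicit.
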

\begin{proof}
Let $\mathcal{T} \in \tau(G'')$. We construct a spanning tree $T \in \lambda(G, T_0)$ as follows. For each edge $e = (v, w) \in \mathcal{T}$ choose a representative $(v', w') \in \pi^{-1}(e)$ and set $(v', w')$ to be an edge in $T$. We then add in all the edges of $G'-G$. Clearly the resulting graph satisfies \eqref{eqn1} because it contains every edge of $G'-G$. It is spanning and connected because it connects every connected component of $G'-G$ and contains every edge in $G'-G$. We show that it is acyclic. Suppose $C$ is a cycle of $T$; without loss of generality we can assume it is simple. Note that $G' - G$ must be acyclic because its edge set is a subset of edges of the tree $T_0$, so $C$ cannot be a subgraph of $G'-G$. Therefore $\pi(C)$ is not a single vertex. If $\pi(C)$ is a single edge, then $C$ contains two distinct edges joining two components of $G'-G$, which is not possible by our construction of $T$. Therefore by the previous lemma $\pi(C)$ is a simple cycle of $\mathcal{T}$ which is a contradiction. 

Every $T \in \lambda(G, T_0)$ can be constructed in this manner. Since $T$ is a tree it must descend to a tree $\mathcal{T}$ in $G''$. An edge $(v, w) \in T$ corresponds to the edge $([v], [w]) \in \mathcal{T}$; by choosing $v$ and $w$ as the representatives in the construction above we get back the edge $(v, w)$. 

It is also clear that distinct trees $\mathcal{T}_1$ and $\mathcal{T}_2$ in $H$ will give different trees in $G'$. For a single $\mathcal{T}$, each distinct choice of representatives for each edge will give a distinct tree $T$. Therefore $|\lambda(G, T_0)|$ is the number of ways to carry out the above construction, which is given by the formula above.
\end{proof}

We now give weights to the edges of $G''$: let each edge $e \in \mathcal{E}(G'')$ have the weight $|\pi^{-1}(e)|$. The weight of a tree $\mathcal{T} \in \tau(G'')$ is defined as the product of the weight of its edges:
\[
w(\mathcal{T}) = \prod_{e \in \mathcal{E}(\mathcal{T})} w(e).
\]
Then 
\[
\lambda(G, T_0) = \sum_{\mathcal{T} \in \tau(G'')} w(\mathcal{T}).
\]
The expression above is called the weighted tree enumerator of $G''$. The weighted Kirchoff matrix tree theorem gives a way to compute this quantity.

\begin{theorem}[Formula for $\lambda(G, T_0)$]
Let $L$ be the Laplacian of $G''$ taking into account the weights of the edges, i.e.
\[
L = D - A
\]
where $A$ is the weighted adjancency matrix and $D$ is the diagonal matrix whose $i$th entry is the sum of weights of edges incident to vertex $i$. Then $|\lambda(G, T_0)|$ is the absolute value of any cofactor of $L$. 
\end{theorem}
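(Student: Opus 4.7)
The plan is to leverage Lemma~\ref{lambda is weighted tree enumerator}, which already expresses
\[
|\lambda(G, T_0)| = \sum_{\mathcal{T} \in \tau(G'')} w(\mathcal{T}),
\]
where $w(\mathcal{T}) = \prod_{e \in \mathcal{E}(\mathcal{T})} |\pi^{-1}(e)|$. Once this identification is made, the theorem reduces to the \emph{weighted} version of Kirchhoff's matrix tree theorem applied to $G''$: the weighted tree enumerator of any finite graph equals any cofactor of its weighted Laplacian. Thus I would structure the proof as (i) quote the formula from the preceding lemma, and (ii) invoke the weighted Kirchhoff theorem with a short proof sketch adapted from the unweighted version (Theorem~\ref{thm: kmtt}).

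For step (ii), I would orient each edge of $G''$ arbitrarily and form the weighted incidence matrix $M$ whose $(i,e)$-entry is $+\sqrt{w(e)}$ for the head of $e$, $-\sqrt{w(e)}$ for the tail, and $0$ otherwise. A direct computation shows $M M^{\top} = L$: the diagonal entry at vertex $i$ sums $w(e)$ over all edges incident to $i$ (recovering $D_{ii}$), and the off-diagonal $(i,j)$ entry equals $-w(e)$ when $e = (i,j)$ is an edge of $G''$ and $0$ otherwise (recovering $-A_{ij}$). Deleting any row of $M$ to obtain $\tilde M$, applying the Cauchy--Binet formula gives
\[
\det(\tilde M \tilde M^{\top}) = \sum_{S} \det(\tilde M_S)^2,
\]
where $S$ ranges over $(n-1)$-subsets of edges of $G''$, and $\tilde M_S$ denotes the corresponding square submatrix. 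A standard argument shows that $\det(\tilde M_S) = 0$ unless $S$ is the edge set of a spanning tree $\mathcal{T}$ of $G''$, in which case $\det(\tilde M_S) = \pm \prod_{e \in S} \sqrt{w(e)}$, so the squared determinant equals $w(\mathcal{T})$. Summing yields exactly the weighted tree enumerator, which matches the right-hand side of the formula in Lemma~\ref{lambda is weighted tree enumerator}.

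The one place that needs care is ensuring that the square roots combine correctly through Cauchy--Binet so that each tree contributes $w(\mathcal{T})$ rather than $\sqrt{w(\mathcal{T})}$ or $w(\mathcal{T})^2$; this is handled by noting that the $\pm \sqrt{w(e)}$ factor appears once per edge in \emph{each} copy of $\tilde M_S$, and the two copies multiply to give $w(e)$ on squaring the determinant. The absolute value in the statement is then a formality: $|\pi^{-1}(e)| \geq 0$ implies all cofactors of $L$ are non-negative and in fact equal to the same non-negative integer $\sum_{\mathcal{T}} w(\mathcal{T})$, independent of which row and column are deleted, by symmetry of the argument. The main ``obstacle'' is purely expository, namely deciding how much of the classical weighted matrix tree theorem to reprove versus cite, since the rest of the deduction from Lemma~\ref{lambda is weighted tree enumerator} is immediate.
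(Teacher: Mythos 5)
Your proposal is correct and takes essentially the same route as the paper: the paper likewise obtains the theorem by combining Lemma~\ref{lambda is weighted tree enumerator} with the weighted Kirchhoff matrix tree theorem, which it simply cites, so your incidence-matrix/Cauchy--Binet argument just fills in that classical citation with a self-contained proof. The only loose point is the appeal to ``symmetry'' for non-principal cofactors; the standard justification is that every row and column of $L$ sums to zero, which forces all cofactors of $L$ to be equal.
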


\begin{example}
If $G$ and $G'$ have the same edges then $G'-G$ has no edges and $G = G' = G''$. In this case $\lambda(G,T_0)$ is the number of trees of $G$ and this agrees with the formula above (the empty product is defined as 1). 
\end{example}

\subsection{Upper bound}
\label{sec:upper}

In this section, we derive an upper bound for \eqref{ratio}
by bounding $|\lambda(G, T_0)|$.
Suppose $G= (V,E) \in \mathrm{im}(\Phi_k)$, so that for $1 \leq j \leq k$ there exists $T_j \in \tau(K)$ such that $E = \bigcup_{j=1}^k \mathcal{E}(T_j)$. Define $G_j = (V, E_j)$ by $E_j=\bigcup_{i=1}^{j} \mathcal{E}(T_i)$. We therefore have
\[
G = T_k + G_{k-1}.
\]
Using the results from above we get
\[
|Y(G, k)| \geq |Y(G_{k-1}, k-1)|\, |\lambda(G_{k-1}, T_k )|.
\]
By iterating we get 
\[
    |Y(G, k)| \geq \prod_{j=1}^{k-1} |\lambda(G_j, T_{j+1})|.
\]
Now note that the ordering of the $T_j$ does not matter to the argument. Let $S_k$ denote the $k$th permutation group and for $\sigma \in S_k$ let $G_{j\sigma}$ denote $\bigcup_{1}^j T_{\sigma(i)}$. Then
\begin{equation}\label{eqn2}
    |Y(G, k)| \geq \max_{\sigma \in S_k}\prod_{j=1}^{k-1} |\lambda(G_{j\sigma}, T_{\sigma(j+1)})|.
\end{equation}
This bounds the denominator in \eqref{ratio}. Next we address the numerator. Suppose $S \in Y(G', k+1)$. For each edge $e \in G'$, at least one of the trees in $S$ must contain $e$. Therefore,
\[
Y(G', k+1) \subset \left\{ \{T\} \cup S: T \in \tau_e(G'), S \in \binom{\tau(K)}{k}\right\}.
\]
This gives the inequality
\begin{equation}\label{eqn3}
|Y(G', k+1)| \leq \min_{e \in E(G')} |\tau_e(G')|\, |\tau(K)|^{k}.
\end{equation}
The computation of $\tau_e(G')$ can be carried out as follows: let $G/e$ be the graph obtained by contracting the edge $e$, let $\pi$ be the quotient, and for any edge $e'$ in $G/e$ assign the weight $\pi^{-1}(e')$ to $e'$. Then $|\tau_e(G)|$ is simply the weighted tree enumerator of $G/e$; the proof is similar to the proof for Lemma~\ref{lambda is weighted tree enumerator}. 
Equations~\ref{eqn2} and \ref{eqn3} \noeqref{eqn2} \noeqref{eqn3}
together give an upper bound:
\begin{equation}
    \frac{|Y(G', k+1)|}{|Y(G, k)|} \leq \frac{\min_{e \in E(G')}|\tau_e(G')| |\tau(K)|^{k}}{\max_{\sigma \in S_k}\prod_{j=1}^{k-1} |\lambda(G_{j\sigma}, T_{\sigma(j+1)})|}.
\end{equation}
Altogether, we have:
\begin{theorem}
\[
|\lambda(G, T_0)| \leq \frac{|Y(G', k+1)|}{|Y(G, k)|} \leq \frac{\min_{e \in E(G')}|\tau_e(G')| |\tau(K)|^{k}}{\max_{\sigma \in S_k}\prod_{j=1}^{k-1} |\lambda(G_{j\sigma}, T_{\sigma(j+1)})|}.
\]
\end{theorem}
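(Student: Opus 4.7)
The plan is to assemble the two one-sided bounds that have already been established in Sections~\ref{sec:lower} and \ref{sec:upper}; no new combinatorial work is needed, only a clean presentation of the containments and inequalities already in hand.

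First I would dispatch the lower bound. The argument at the start of Section~\ref{sec:lower} exhibits the containment
\[
Y(G+T_0, k+1) \supset \{S \cup \{T\} : S \in Y(G, k),\, T \in \lambda(G, T_0)\},
\]
where each pair $(S,T)$ on the right yields a distinct element of $\binom{\tau(K)}{k+1}$ because $T \notin S$ (any $T \in \lambda(G, T_0)$ contains an edge of $T_0$ outside $G$, while every tree in $S$ lies in $G$). Taking cardinalities gives $|Y(G', k+1)| \geq |Y(G, k)|\,|\lambda(G, T_0)|$, which on division is the left inequality of the theorem.

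For the upper bound I would invoke \eqref{eqn2} and \eqref{eqn3} directly. Equation \eqref{eqn2} is obtained by iterating the lower bound on $|Y(\cdot, \cdot)|$ along the filtration $G_1 \subset G_2 \subset \cdots \subset G_k = G$ induced by any spanning tree decomposition of $G$, and then maximising over the order in which the trees $T_1, \ldots, T_k$ are added, since the left-hand side $|Y(G,k)|$ does not depend on this order. Equation \eqref{eqn3} follows by observing that every $S \in Y(G', k+1)$ must contain at least one spanning tree through any chosen edge $e \in E(G')$, hence the map $S \mapsto (T, S\setminus\{T\})$ with $T \in S \cap \tau_e(G')$ embeds $Y(G', k+1)$ into $\tau_e(G') \times \binom{\tau(K)}{k}$; minimising over $e$ tightens the bound. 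Dividing \eqref{eqn3} by \eqref{eqn2} yields the right inequality.

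The only mild subtlety in this assembly is ensuring that the sets $S$ and $\{T\}$ appearing in the lower-bound containment are genuinely disjoint, so that the pair $(S,T)$ produces a valid $(k+1)$-element subset of $\tau(K)$; this is precisely why $T$ is taken from $\lambda(G, T_0)$ rather than from $\tau(G')$ at large. Beyond this, the statement is simply the concatenation of the two displayed inequalities at the ends of Sections~\ref{sec:lower} and \ref{sec:upper}, so I would expect the formal proof to be only a few lines long, citing those passages explicitly.
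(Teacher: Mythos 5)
Your proposal is correct and takes essentially the same route as the paper, which likewise obtains the theorem simply by concatenating the containment $Y(G+T_0, k+1) \supset \{S \cup \{T\} : S \in Y(G,k),\, T \in \lambda(G,T_0)\}$ from Section~\ref{sec:lower} with the inequalities \eqref{eqn2} and \eqref{eqn3} from Section~\ref{sec:upper}. Your explicit check that $T \notin S$ (so that $(S,T) \mapsto S \cup \{T\}$ injects into the $(k+1)$-subsets of $\tau(K)$) is a point the paper leaves implicit, and is a welcome addition --- though note it relies on $T_0 \not\subseteq G$, i.e., $G' \neq G$, since in the degenerate case $\lambda(G,T_0) = \tau(G)$ and the trees in $\lambda(G,T_0)$ need not contain an edge outside $G$.
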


Figure~\ref{fig:bound} shows that the lower and upper bounds derived here for $|Y(G', k+1)|/|Y(G, k)|$
can be multiple orders of magnitude different from $|Y(G', k+1)|/|Y(G, k)|$.
As such, they do not provide usable approximations for a Metropolis-Hastings acceptance ratio.

\begin{figure}[htbp]
    \centerline{\includegraphics[width=\textwidth]{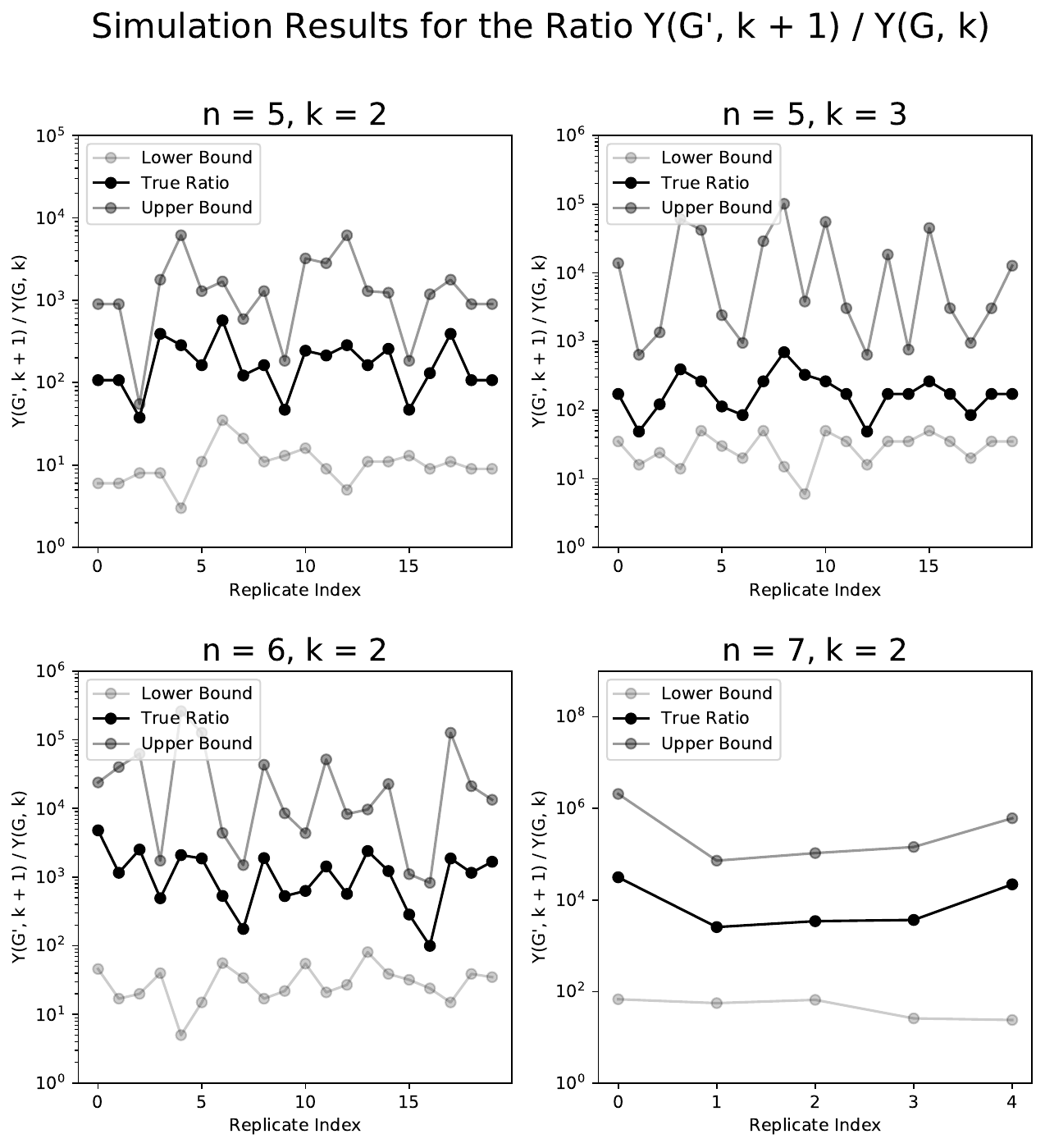}}
	\caption{The value, lower bound from Section~\ref{sec:lower} and upper bound from Section~\ref{sec:upper} of
	$|Y(G', k+1)|/|Y(G, k)|$
	on a logarithmic scale
	for fixed number of vertices $n$ and number of spanning trees $k$.
	Here, the replicates consider randomly sampled $G$ and $G'$ where $G$ is a union of $k$ uniformly sampled spanning trees
	and $G'=G+T_0$ with $T_0$ a uniformly sampled spanning tree.
 }	\label{fig:bound}
\end{figure}

\section{Posterior inference on the cycle space using MCMC}
\label{sec:mcmc}

\add{We describe two different MCMC algorithms for posterior inference on the cycle space.
Firstly, we consider general posteriors arising from prior distributions on spanning trees and associated cycle bases.
Secondly, we consider the uniform prior over the cycle space which allows for computationally less expensive MCMC.}

\subsection{General spanning tree prior $p(T)$}
\label{sec:mcmc_general}

\add{We consider the following prior construction for the cycle space $\mathcal{C}_n$.
Note that the distribution over $\mathcal{C}_n$ can change with the basis $C$ used: see for instance Corollary~\ref{main:cor:edge_prob_star}.
Thus, we do not constrain our inference to a single $C$.
Let $p(T)$ be the prior distribution on spanning trees which each correspond to a $C$.
Then, a prior distribution on basis element inclusions induces a prior $p(G\mid T)$ over $\mathcal{C}_n$.
The joint posterior of $G$ and $T$ follows then from the likelihood $p(X\mid G)$ in \eqref{main:eq: marginal-lik} as
$p(G,T\mid X)\propto p(T)\, p(G\mid T)\, p(X\mid G)$.

We use Metropolis-Hastings within Gibbs to sample from $p(G,T\mid X)$.
Specifically, we update the graph $G$ using a Metropolis proposal $q(G'\mid G,T)$
that involves flipping one random basis element in $G$, i.e.\ flipping all edges in one uniformly sampled element of $C$.
Thus, this is a multiple-edge update.
Also, the proposal is guaranteed to be in $\mathcal{C}_n$ because the latter is a vector space.
Note that the support of $q(G'\mid G,T)$ varies with $T$
which determines the basis elements $C$ to choose from.
In other words, which graphs are in the ``neighbourhood'' of $G$ varies with $T$,
resulting in differing exploration of the graph space depending on the current value of $T$.

For the spanning tree $T$, we use its prior distribution as an independent Metropolis-Hastings proposal, $q(T'\mid T) = p(T')$.
A change in $T$ corresponds to a change in basis $C$
and thus requires the decomposition of $G$ in terms of this new basis.
This decomposition involves a matrix inverse over the field $\mathbb{Z}_2$
which is computationally expensive for a large number of vertices $n$ as the size of the matrix to invert is super-exponential in $n$.
Therefore,
we repeat the update of $G$
multiple times for each update of $T$.
Algorithm~\ref{alg:MCMC}
summarises the resulting MCMC algorithm.

}

\begin{algorithm}
\add{
\caption{MCMC step for $p(G,T\mid X)$ for a general spanning tree prior $p(T)$ \label{alg:MCMC}}
\begin{enumerate}
\item \label{step:update_G}
Perform the Metropolis update for $G$ a fixed number of times:
\begin{enumerate}
    \item
    Sample a proposed $G'$ from $q(G'\mid G,T)$.
    \item \label{step:lik_ratio}
    Set $G = G'$ with probability $\min(1,\alpha_G)$ where
    \[
        \alpha_G = \frac{p(G',T\mid X)\, q(G\mid G',T)}{p(G,T\mid X)\, q(G'\mid G,T)}
        = \frac{p(G'\mid T)\, p(X\mid G')}{p(G\mid T)\, p(X\mid G)}
    \]
\end{enumerate}
\item \label{step:update_T}
Perform the Metropolis-Hastings update for $T$:
\begin{enumerate}
    \item
    Sample a proposed $T'$ from $q(T'\mid T) = p(T')$.
    \item \label{step:decomposition}
    Compute the decomposition of $G$ in terms of the basis generated by $T'$
    for the evaluation of $p(G\mid T')$.
    \item
    Set $T'=T$ with probability $\min(1,\alpha_T)$ where
    \[
        \alpha_T = \frac{p(G,T'\mid X)\, q(T\mid T')}{p(G,T\mid X)\, q(T'\mid T)}
        = \frac{p(G\mid T')}{p(G\mid T)}
    \]
\end{enumerate}
\end{enumerate}
}
\end{algorithm}

To compute $\frac{p(X \mid G')}{p(X \mid G)} = \frac{I_{G'}(\delta + N, D + U) }{I_G(\delta + N, D + U)} \frac{ I_G(\delta, D)}{ I_{G'}(\delta, D)}$
in \add{Step~\ref{step:lik_ratio}} of Algorithm~\ref{alg:MCMC}, 
we use two different approximations. For $I_{G'}(\delta + N, D + U)$ and $I_{G}(\delta + N, D + U)$, we use the Laplace approximation from \citet{lenkoski2011computational}. For $\frac{ I_G(\delta, D)}{ I_{G'}(\delta, D)}$, we use the approximation for the ratio of $G$-Wishart normalising constants from \citet{Mohammadi2021}. This is because the Laplace approximation is known to be inaccurate when the degrees of freedom $\delta$ is small \citep{lenkoski2011computational}. In our case, $G$ and $G'$ differ by at least three edges
while \citet{Mohammadi2021} consider single edge updates.
We therefore approximate
$\frac{ I_G(\delta, D)}{ I_{G'}(\delta, D)}$
as the product of a sequence of (at least three) ratio approximations
from \citet{Mohammadi2021}
each involving a single edge change.

\add{
\subsection{Uniform prior over the cycle space}
\label{sec:mcmc_uniform}
}

We now discuss the case with
$p(G\mid T)$
being the uniform prior over $\mathcal{C}_n$
as it allows for simplifications.
\add{Then, the choice of spanning tree $T$ is irrelevant per Proposition~\ref{main:prop:uniform}, negating the need for the expensive decomposition in Step~\ref{step:decomposition} in Algorithm~\ref{alg:MCMC}.}
Also, sampling \add{a cycle of three vertices}
uniformly at random and flipping the corresponding edges in $G$ as proposal
yields $p(G'\mid G)=p(G\mid G')$.
Note that $p(G')/p(G)=1$ by construction.
Thus, the Metropolis acceptance ratio follows as $\alpha = p(X\mid G') / p(X\mid G)$.
We use
this MCMC algorithm that proposes to change a uniformly sampled 3-cycle
in Section~\ref{main:sec5}.

\add{We use this MCMC algorithm also for the uniform prior over all graphs $\mathcal{G}_n$
and over all decomposable graphs
with the modification that the proposal $p(G'\mid G)$
follows from randomly flipping one instead of three edges
for the results in Section~\ref{main:sec5}.
We run this algorithm
for $10^6$
iterations with $10^5$
burn-in iterations
for $\mathcal{C}_n$ and $\mathcal{G}_n$,
and for $10^7$
iterations with $10^6$
burn-in iterations
for the decomposable graphs:
we use more MCMC iterations for the decomposable graphs since many Metropolis proposals $G'$ resulting from flipping an edge are rejected because they are not decomposable, and thus have no prior nor posterior mass, resulting in worse mixing of the MCMC chain.}
Implementation of the MCMC algorithms
and code used to generate the results in Section~\ref{main:sec5}
are available from \url{https://github.com/kristoforusbryant/cbmcmc}.

\bibliographystyle{biometrika}
\bibliography{bibliography}